\documentclass[journal]{IEEEtran}

\ifCLASSINFOpdf
\else
\fi

\usepackage{amsmath,amsfonts}
\usepackage{algorithm}
\usepackage{algpseudocode}
\usepackage{array}
\usepackage[caption=false,font=footnotesize,labelfont=sf,textfont=sf]{subfig}
\usepackage{textcomp}
\usepackage{stfloats}
\usepackage{url}
\usepackage{verbatim}
\usepackage{graphicx}
\usepackage{cite}
\usepackage{amssymb}
\usepackage{amsthm}

\newtheorem{lemma}{Lemma}
\newtheorem{theorem}{Theorem}
\newtheorem{corollary}{Corollary}

\begin{document}
%
\title{EQE-QAOA: An Equivalence-Preserving Qubit Efficient Framework for Combinatorial Optimization\\[0.6em]}

\author{Xiaoyu Ma,
        Fang Fang,~\IEEEmembership{Senior Member,~IEEE,} Ximing Xie,~\IEEEmembership{Member,~IEEE,} Xianbin Wang,~\IEEEmembership{Fellow,~IEEE,} and~Lajos Hanzo,~\IEEEmembership{Life Fellow,~IEEE}
\thanks{Xiaoyu Ma, Ximing Xie and Xianbin Wang are with the Department of Electrical and Computer Engineering, Western University, London, ON N6A 5B9, Canada (e-mail: xma535@uwo.ca; xxie269@uwo.ca; xianbin.wang@uwo.ca).}
\thanks{Fang Fang is with the Department of Electrical and Computer Engineering and the Department of Computer Science, Western University, London, ON N6A 5B9, Canada (e-mail: fang.fang@uwo.ca).}
\thanks{Lajos Hanzo is with the Department of Electronics and Computer Science, University of Southampton, Southampton SO17 1BJ, U.K. (e-mail:lh@ecs.soton.ac.uk).}}

%
%
%

\maketitle

\begin{abstract}
    The limited number of qubits is a major bottleneck in Quantum Approximate Optimization Algorithm (QAOA) for large-scale combinatorial optimization in the Noisy Intermediate-Scale Quantum (NISQ) era. To make progress, existing techniques rely on qubit reduction at the cost of information loss, hence leading to degraded computational performance. As a remedy, we propose the Equivalence-preserving Qubit Efficient QAOA (EQE-QAOA), which significantly reduces the required number of qubits without degrading the performance of QAOA. By exploiting intrinsic symmetries and conserved quantities, we first demonstrate that the QAOA dynamics are strictly confined to an invariant subspace of the Hilbert space. We subsequently prove that the evolution within this subspace is exactly equivalent to that of the full-scale system, achieving the same optimal solution as the original QAOA. Moreover, to reduce the number of qubits, we propose an isometric mapping that re-encodes the subspace into a space relying on fewer qubits. Furthermore, we derive the applicability conditions of EQE-QAOA and show that it is broadly applicable to large-scale combinatorial optimization problems, excluding only unconstrained problems with completely independent variables. Numerical simulations based on Max-Cut instances validate that EQE-QAOA significantly reduces qubit requirements and computational resources, while preserving exact optimization performance.
\end{abstract}

\begin{IEEEkeywords}
Quantum approximate optimization algorithm, combinatorial optimization, invariant subspace, conserved quantities, equivalence-preserving, qubit reduction.
\end{IEEEkeywords}

\IEEEpeerreviewmaketitle

\section{Introduction}

\IEEEPARstart{Q}{uantum} Computing (QC)~\cite{steane1998quantum, bhat2022quantum, sood2023quantum, khang2024introduction} has emerged as a competitive candidate for solving combinatorial optimization problems~\cite{cui2022quantum, chicano2025combinatorial, mitsiou2025quantum, krikidis20251, 9963673, 7482755, 11442662} by exploring exponentially large solution spaces. This unique capability arises from exploiting fundamental quantum properties such as superposition and interference, which allow quantum systems to represent and manipulate many candidate solutions simultaneously~\cite{shafique2024quantum, lu2023quantum}. Consequently, QC provides compelling global search capabilities, potentially capable of outperforming classical heuristics in finding high-quality solutions.

Despite this potential, applying quantum theory to practical problems has been a challenge. Current quantum devices operate in the Noisy Intermediate-Scale Quantum (NISQ) era~\cite{preskill2018quantum}, where hardware performance remains constrained by environmental disturbances and decoherence. Under these limitations, the Quantum Approximate Optimization Algorithm (QAOA)~\cite{farhi2014quantum, blekos2024review, cui2022quantum, choi2019tutorial, zhou2023qaoa} has emerged as a critical approach due to its hybrid classical–quantum structure, enabling efficient computation on quantum devices~\cite{guerreschi2019qaoa}. However, the scarcity of usable qubits remains a fundamental bottleneck in near-term quantum computing, despite the rapid progress in hardware development~\cite{preskill2018quantum, kecceci2025accuracy}. Even though modern superconducting and ion-trap devices already provide thousands of physical qubits, two salient reasons still limit their effective use, leaving qubit resources scarce. Firstly, due to quantum hardware noise, limited coherence time (the duration over which quantum states maintain superposition), and restricted qubit connectivity, only around 10–20 qubits can typically be reliably used for practical quantum computations~\cite{brown2021materials, preskill2018quantum}. Secondly, computer simulation of quantum systems is severely constrained by the exponential growth of the Hilbert space, with practical limits around 30 qubits~\cite{young2023simulating, scherer2017computational}. This limitation is further exacerbated in conventional QAOA implementations, where variable-level encoding leads to a number of required qubits that scales linearly with the problem size. More importantly, a recent study proved that achieving a clear quantum advantage with QAOA may require at least hundreds of high-quality qubits~\cite{guerreschi2019qaoa}. Consequently, developing qubit-efficient quantum optimization methods has become a critical research area.

To address this issue, researchers have explored various strategies to reduce qubit requirements. These efforts generally fall into three categories: (i) compact encoding, which represents multiple variables using fewer qubits~\cite{tan2021qubit, podobrii2025qubit, kanatbekova2025qubit}, (ii) problem partitioning, which decomposes a large problem into smaller subproblems with fewer qubits~\cite{zhou2023qaoa, li2022large, hong2025qubit}, (iii) circuit-level compression, which improves qubit utilization to reduce qubit requirements~\cite{gokhale2019asymptotic, decross2023qubit}. Specifically, in the first category, compact encoding strategies redesign encoding schemes to reduce the number of qubits. Tan \textit{et al.}~\cite{tan2021qubit} introduced a Minimal Encoding framework for Quadratic Unconstrained Binary Optimization (QUBO) problems, mapping $n_c$ classical variables onto $\mathcal{O}(\log n_c)$ qubits. This has been applied to both financial transaction settlement~\cite{huber2024exponential} and vehicle routing problems~\cite{leonidas2024qubit}, but this approach is essentially a lossy compression. It limits global entanglement, leading to reduced accuracy for complex problems and increased measurement overhead. To improve this, Podobrii \textit{et al.}~\cite{podobrii2025qubit} shifted the focus from global to local search, encoding only the neighborhood of a solution to handle larger instances. Additionally, Kanatbekova \textit{et al.}~\cite{kanatbekova2025qubit} further reduced auxiliary qubits for inequality constraints through exponential penalty functions. However, these methods remain heuristic and cannot guarantee mathematical equivalence to the original system dynamics, while also requiring higher precision or prior knowledge in practical implementations. Secondly, the problem partitioning strategy decomposes large-scale problems into smaller sub-tasks with fewer qubits. Shaydulin \textit{et al.}~\cite{zhou2023qaoa} proposed the QAOA-in-QAOA framework, which recursively decomposes large problem instances and reconstructs global solutions from subgraph results. Li \textit{et al.}~\cite{li2022large} later applied partitioning tools like measurement distribution reconstruction (MDR) to split Max-Cut problems into independent tasks, utilizing compensation mechanisms to handle boundary information loss. More recently, Hong \textit{et al.}~\cite{hong2025qubit} formulated this strategy within an Alternating Direction Method of Multipliers (ADMM)-based framework, enabling large-scale constrained optimization through alternating quantum sub-problems. However, by partitioning the global structure of the original problem, these approaches introduce significant classical communication overhead and cannot guarantee strict mathematical equivalence between the resultant quantum evolution and that of the original system. Thirdly, circuit-level compression optimizes physical implementations to reduce qubit numbers. Gokhale \textit{et al.}~\cite{gokhale2019asymptotic} showed that high-dimensional quantum states (Qutrits) can break depth limits for multi-controlled gates, achieving asymptotic circuit compression. Meanwhile, DeCross \textit{et al.}~\cite{decross2023qubit} utilized mid-circuit measurement and reset to enable qubit reuse, allowing fewer physical qubits to simulate larger systems over time. While these methods reduce qubit requirements, they rely on additional quantum operations (e.g., mid-circuit measurements for qubit reuse). These operations introduce additional noise, which degrades computational accuracy on practical quantum devices. Table~\ref{tab:comparison} compares our method with representative qubit reduction methods for QAOA. Although these existing methods reduce qubit requirements, but at the cost of degrading computational performance. Only the proposed EQE-QAOA satisfies all these features while preserving optimization accuracy.

\renewcommand{\arraystretch}{1.15}
\begin{table*}[t]
\caption{Comparison of representative qubit reduction methods for QAOA}
\label{tab:comparison}
\centering
\begin{tabular}{|p{5.2cm}|c|c|c|c|c|c|c|c|c|}
\hline
\textbf{Features} & [19] & [20] & [21] & [13] & [22] & [23] & [24] & [25] & \textbf{EQE-QAOA} \\
\hline
Preserves variable correlations &  &  & $\checkmark$ &  & $\checkmark$ &  & $\checkmark$ & $\checkmark$ & $\checkmark$ \\
\hline
Preserves problem structure & $\checkmark$ & $\checkmark$ & $\checkmark$ &  &  &  & $\checkmark$ & $\checkmark$ & $\checkmark$ \\
\hline
Preserves global search capability & $\checkmark$ &  & $\checkmark$ &  &  &  & $\checkmark$ & $\checkmark$ & $\checkmark$ \\
\hline
No additional noise-inducing operations & $\checkmark$ & $\checkmark$ & $\checkmark$ & $\checkmark$ & $\checkmark$ & $\checkmark$ &  &  & $\checkmark$ \\
\hline
No manual intervention required &  &  &  & $\checkmark$ & $\checkmark$ & $\checkmark$ & $\checkmark$ & $\checkmark$ & $\checkmark$ \\
\hline
Equivalence to the original QAOA evolution &  &  &  &  &  &  & $\checkmark$ & $\checkmark$ & $\checkmark$ \\
\hline
\textbf{Preserves optimization accuracy} &  &  &  &  &  &  &  &  & $\checkmark$ \\
\hline
\end{tabular}
\end{table*}

To overcome the performance degradation encountered by existing qubit reduction methods, we investigate the intrinsic physical structure of quantum dynamics to achieve more effective qubit reduction and lossless optimization. In quantum computing, the Hilbert space represents the complete state space of a quantum system, whose dimension grows exponentially with the number of qubits~\cite{nielsen2010quantum}. Extensive studies in quantum physics indicate that physical systems often possess inherent structural constraints, such as symmetries and conservation laws, which govern the decomposition of the Hilbert space into smaller invariant subspaces~\cite{weyl1950theory, tinkham2003group, nakazato2023invariant}. In particular, a symmetry of the system Hamiltonian implies a conserved quantity that remains unchanged during the dynamical evolution~\cite{salmistraro1990quantum}. This property enables a rigorous decomposition of the full Hilbert space into a direct sum of independent, dynamically invariant subspaces, where each subspace corresponds to an eigenspace of the conserved quantity. Magalhães de Castro \textit{et al.}~\cite{de2023analytically} further demonstrated that even highly structured and complex systems can enable a decomposition into lower-dimensional invariant subspaces. In such representations, the global evolution can be expressed as a collection of independent dynamical processes acting on these subspaces~\cite{hioe1981n}. These observations provide a fundamental theoretical foundation for our proposed qubit reduction framework design.

Inspired by an observation in quantum physics~\cite{nielsen2010quantum, weyl1950theory, tinkham2003group, nakazato2023invariant, salmistraro1990quantum, de2023analytically, hioe1981n}, we first prove that if the operators in QAOA share common symmetries, the system dynamics are strictly confined to a corresponding invariant subspace. We obtain the insight that there is no dynamical coupling between different subspaces, and this restriction is mathematically exact and lossless. This insight indicates that quantum algorithms can operate entirely within the invariant subspace, thereby avoiding the exponential complexity of the full-scale Hilbert space. Based on this, we propose a mapping to re-encode the full-scale Hilbert space into a reduced qubit space. Building on this principle, we propose an Equivalence-preserving Qubit Efficient QAOA (EQE-QAOA) framework, which exploits the underlying invariant subspace structures and constructs an isometric mapping into a reduced qubit space. Fig.~\ref{fig_0} illustrates the key idea of the proposed EQE-QAOA framework. This framework achieves qubit reduction, while attaining a performance equivalent to that of the original QAOA. The main contributions of this paper are summarized as follows:

\begin{figure}[t]
  \centering
  \includegraphics[width=0.9\linewidth]{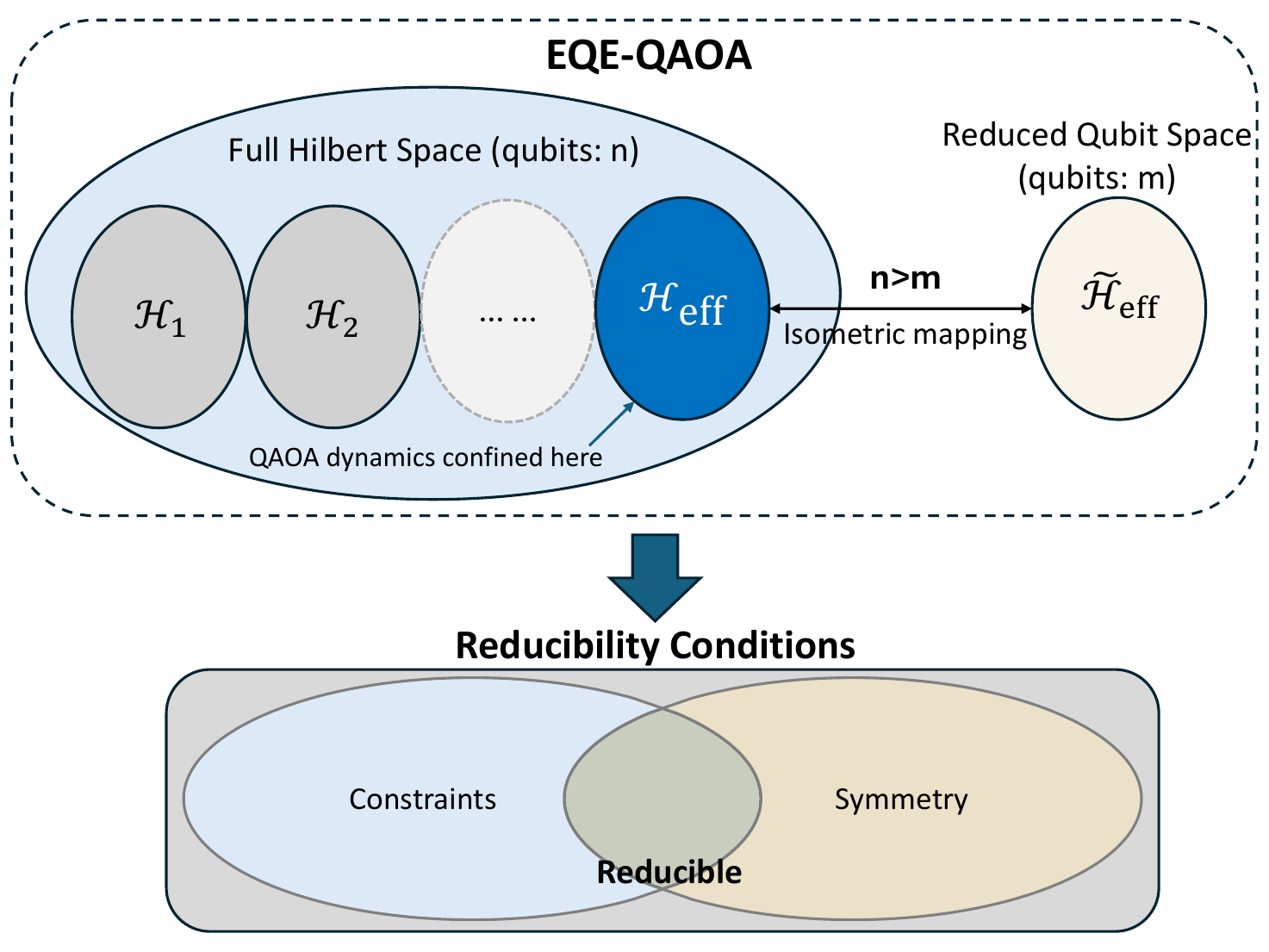}
  \caption{Overview of the proposed EQE-QAOA framework and its reducibility conditions.}
  \label{fig_0}
\end{figure}

\begin{enumerate}
    \item \textbf{Invariant subspace decomposition:} By utilizing the dynamical Lie algebra~\cite{d2021introduction} and calculating its commutant, we identify conserved quantities in the QAOA evolution. This enables the decomposition of the Hilbert space into invariant subspaces, with the evolution confined to a single subspace. Consequently, the space of QAOA evolution is reduced.
    \item \textbf{Proof of performance equivalence between subspace and full-space based QAOA:} We derive an equivalence theorem, guaranteeing that evolution in the subspace preserves identical quantum states and measurement statistics in the full-space evolution. This ensures that the evolution space can be reduced without any loss of information or degradation in optimization performance.
    \item \textbf{Isometric mapping to a lower-dimensional representation with fewer qubits:} We construct an isometric mapping that encodes the invariant subspace into a lower-dimensional qubit representation. This mapping allows the number of qubits to be reduced without information loss.
    \item \textbf{Deriving the conditions for EQE-QAOA applicability:} We derive the formal conditions under which a problem is suitable for our proposed EQE-QAOA framework. Our analysis shows that EQE-QAOA is broadly applicable to large-scale constrained optimization problems in engineering, where it reduces both the qubit resources and the computational complexity.
\end{enumerate}

The remainder of this paper is organized as follows. Section II provides the quantum representation and preliminaries. Section III introduces the EQE-QAOA framework and its qubit-efficient encoding scheme. Section IV presents the applicability and scalability of the proposed method through structural and complexity analysis. Section V presents numerical results, while Section VI concludes the paper.

\section{Quantum Representation and Preliminaries}

We first define a unified representation to express the combinatorial optimization problem in quantum form, specifying the problem Hamiltonian and the corresponding QAOA dynamics as the foundation of EQE-QAOA.

\subsection{From Combinatorial Optimization to Quantum Representation}

Many combinatorial optimization problems can be formulated as binary optimization tasks with decision variables $x = (x_1, x_2, \ldots, x_n)$, where $x_i \in \{0,1\}$ and $n$ denotes the number of decision variables. Within the QAOA framework, such problems are typically expressed as a QUBO model, defined by:%
\begin{equation}
    \min_{x \in \{0,1\}^n} f(x) = x^\top W x + c^\top x + c_0 + \sum_{k} \lambda_k g_k(x),
\end{equation}

\noindent  where $W \in \mathbb{R}^{n\times n}$ is a symmetric matrix, $c \in \mathbb{R}^n$ is a bias vector, and $c_0$ is a constant offset. The terms $\sum_k \lambda_k g_k(x)$ represent penalty functions used for enforcing problem constraints. By incorporating these penalties into the objective, constrained optimization problems can be reformulated as unconstrained ones, which enables a direct encoding into the QAOA cost Hamiltonian.

To embed the classical objective into a quantum representation, each binary variable $x_i$ is mapped to a qubit operator $\hat{x}_i$ through the transformation:%
\begin{equation}
    x_i \mapsto \hat{x}_i = \frac{I - Z_i}{2},
\end{equation}

\noindent where $I$ denotes the identity operator and $Z_i$ represents the Pauli-$Z$ operator acting on the $i$-th qubit, and $i \in \{1,2,\ldots,n\}$. An operator is defined as a linear map on the Hilbert space, with $\{I, X, Y, Z\}$ as the basic single-qubit operators. Under this mapping, each binary string $x \in \{0,1\}^n$ corresponds to a computational basis state $|x\rangle$ in the Hilbert space $\mathcal{H} = (\mathbb{C}^2)^{\otimes n}$, allowing the classical objective to be encoded into the quantum Hamiltonian representation. The main notations used throughout this paper are summarized in Table~\ref{tab:notations}.

\renewcommand{\arraystretch}{1.15}
\begin{table}[t]
\caption{Key Notations Used Throughout the Paper}
\label{tab:notations}
\centering
\begin{tabular}{ll}
\hline
\textbf{Symbol} & \textbf{Meaning} \\
\hline
$n$ & Number of qubits in the original formulation \\
$m$ & Number of qubits after reduction \\
$\mathcal H=(\mathbb C^2)^{\otimes n}$ & Original Hilbert space \\
$\mathcal H_{\mathrm{eff}}$ & Effective invariant subspace \\
$M=\dim(\mathcal H_{\mathrm{eff}})$ & Dimension of effective subspace \\
$\tilde{\mathcal H}_{\mathrm{eff}}=(\mathbb C^2)^{\otimes m}$ & Reduced Hilbert space \\
$H_C, H_M$ & Cost and mixer Hamiltonians \\
$H_C^{\mathrm{eff}}, H_M^{\mathrm{eff}}$ & Hamiltonians restricted to $\mathcal H_{\mathrm{eff}}$ \\
$\tilde H_C, \tilde H_M$ & Induced Hamiltonians in $\tilde{\mathcal H}$ \\
$\mathfrak g = \mathrm{Lie}(\mathbf{i}H_C,\mathbf{i}H_M)$ & Dynamical Lie algebra \\
$\mathfrak g'$ & Commutant of $\mathfrak g$ \\
$Q$ & Conserved quantity satisfying $[Q,H_{C,M}]=0$ \\
$\Pi_{\mathrm{eff}}$ & Projector onto $\mathcal H_{\mathrm{eff}}$ \\
$V$ & Isometric embedding $V:\tilde{\mathcal H} \to \mathcal H_{\mathrm{eff}}$ \\
$U(\theta)$ & Full-space QAOA unitary \\
$\tilde U(\theta)$ & Reduced-space QAOA unitary \\
\hline
\end{tabular}
\end{table}

\subsection{Unified Form of the Quantum Hamiltonian}

In the QAOA framework, the optimization process is driven by a pair of core Hamiltonians, namely the Cost Hamiltonian $H_C$ and the Mixer Hamiltonian $H_M$. $H_C$ encodes the objective function into the energy landscape, and $H_M$ enables transitions between computational basis states to explore the solution space.

The classical objective $f(x)$ is transformed into a cost Hamiltonian $H_C$ by applying the mapping in Eq.~(2) to each variable. For QUBO problems, $H_C$ is diagonal in the computational basis, ensuring a direct spectral correspondence between the eigenvalues of $H_C$ and the objective values:%
\begin{equation}
    H_C |x\rangle = f(x) |x\rangle, \quad \forall x \in \left \{ 0,1 \right \}^n. 
\end{equation}

Since $H_C$ is diagonal in the computational basis, transitions between basis states are defined as:%
\begin{equation}
    H_M = \sum_{\ell} \beta_\ell P_\ell,
\end{equation}

\noindent where $\ell \in \{1,\dots,L\}$ indexes the terms in the $H_M$, and $\beta_\ell \in \mathbb{R}$ is a real coefficient controlling the strength of each term. $P_\ell \in \{I, X, Y, Z\}^{\otimes n}$ is a Pauli string containing at least one non-diagonal operator. Each term $P_\ell$ enables transitions between computational basis states. For any two computational basis states $|x\rangle$ and $|x'\rangle$, $H_M$ facilitates transitions of the form:%
\begin{equation}
    \langle x' | H_M | x \rangle \neq 0 \quad \Rightarrow \quad x' = x \oplus \delta,
\end{equation}

\noindent where $\delta \in \{0,1\}^n$ specifies a bit-flip pattern. For instance, the most common choice is the transverse-field mixer $H_M = \sum_i X_i$, where each term $X_i$ allows the system to traverse the Hamming space by flipping individual bits ($\delta$ being a unit vector).

\subsection{Quantum Optimization Dynamics}

The QAOA dynamics are generated by $H_C$ and $H_M$. The parameterized QAOA state with $P$ layers is defined as:%
\begin{equation}
    |\psi(\boldsymbol{\gamma}, \boldsymbol{\beta})\rangle = \left( \prod_{p=1}^{P} e^{-i \beta_p H_M} e^{-i \gamma_p H_C} \right) |\psi_0\rangle,
\end{equation}

\noindent where $|\psi_0\rangle$ denotes the initial state (typically the ground state of $H_M$, such as the uniform superposition). The evolution is governed by two sets of variational parameters: $\boldsymbol{\gamma} = (\gamma_1, \dots, \gamma_P)$ and $\boldsymbol{\beta} = (\beta_1, \dots, \beta_P)$. In each of the $P$ layers, the operator $e^{-i \gamma_p H_C}$ assigns phases to each computational basis state depending on the $f(x)$. Subsequently, the operator $e^{-i \beta_p H_M}$ mixes the states by enabling the legitimate transitions between them. By iteratively applying these layers, the algorithm explores the Hilbert space to concentrate the probability amplitude on the specific states corresponding to the optimal solutions of the original QUBO problem.

\section{The Proposed EQE-QAOA}

To improve the qubit efficiency of QAOA, we propose EQE-QAOA. We first analyze its quantum dynamics and prove that the evolution is confined to an invariant subspace. However, the algorithm is still implemented in the original space with the same number of qubits. To address this, we construct an isometric mapping that encodes the invariant subspace using fewer qubits, enabling qubit reduction while preserving the original QAOA performance.

\subsection{Identification of the Invariant Subspace}

We consider a quantum dynamical system defined on the Hilbert space $\mathcal{H} = (\mathbb{C}^2)^{\otimes n}$, which is the state space containing all possible quantum states of the $n$-qubit system. As implied by Eq.~(6), the dynamics within this space are generated by $H_C$ and $H_M$, which together define the fundamental operators governing the evolution. Therefore, the set of all possible unitary operations, $\mathfrak{g}$, is determined by the dynamical Lie algebra:%
\begin{equation}
    \mathfrak{g} = \mathrm{Lie}(iH_C, iH_M),
\end{equation}

\noindent where $i = \sqrt{-1}$ denotes the imaginary unit. In practice, $\mathfrak g$ is obtained by repeatedly applying commutators of the generators $\{iH_C,iH_M\}$ until closure under the Lie bracket $[\cdot,\cdot]$ is reached. The structure of $\mathfrak{g}$ fully determines the system’s reachability. The dimension and representation of $\mathfrak{g}$ dictate whether the dynamics induced by the Hamiltonians are capable of exploring the full Hilbert space, effectively defining the expressive power of the QAOA.

Having established the reachability of the system via $\mathfrak{g}$, we now define the commutant of $\mathfrak{g}$ as $\mathfrak{g}'$ to capture the system symmetry:%
\begin{equation}
    \mathfrak g' = \{ Q \in \mathsf{End}(\mathcal H) : [Q,X]=0,\ \forall X\in\mathfrak g \},
\end{equation}

\noindent where $\mathfrak{g}'$ is the set of operators in the endomorphism space $\mathsf{End}(\mathcal{H})$ that satisfy $[Q,X]=0$ for all $X \in \mathfrak{g}$. These operators $Q$ commute with all generators in $\mathfrak{g}$, meaning that applying $Q$ before or after the dynamics produces the same result, and the two operations are independent. Therefore, these operators $Q$ leave the QAOA dynamics invariant and correspond to symmetries of the system. As a result, they induce a decomposition of the Hilbert space into invariant subspaces. A fundamental property of Lie algebras ensures that an operator $Q$ commutes with every element of $\mathfrak{g}$ if and only if it commutes with its generators $H_C$ and $H_M$~\cite{weyl1950theory}. Therefore, we simplify the search for $\mathfrak{g}'$ as:%
\begin{equation}
    \mathfrak g' = \{ Q : [Q,H_C]=0,\ [Q,H_M]=0 \}.
\end{equation}

\noindent To construct operators in the commutant $\mathfrak{g}'$, we expand an unknown operator $Q$ in the $n$-qubit Pauli basis $\mathcal P_n$ as:%
\begin{equation}
    Q \;=\; \sum_{P\in\mathcal P_n} q_P\, P.
\end{equation}

\noindent where $\{q_P\}$ are coefficients associated with each Pauli string $P$, which can be determined by solving the commutation constraints $[Q,H_C]=0$ and $[Q,H_M]=0$. Therefore, all operators $Q$ satisfying these constraints constitute the commutant $\mathfrak g'$.

Then we utilize this conserved quantity $Q$ to decompose the Hilbert space into invariant subspaces. For a single QAOA layer ($P=1$ in Eq.~(6)), the unitary operator is given by $U(\gamma,\beta)=e^{-i\beta H_M}e^{-i\gamma H_C}$, which is generated by exponentials of $H_C$ and $H_M$. From Eq.~(9), we have $[Q,H_C]=0$ and $[Q,H_M]=0$. Therefore, $Q$ also commutes with the unitary operator $U(\gamma,\beta)$. This relationship can be written as:%
\begin{equation}
    U(\gamma, \beta) Q U^\dagger(\gamma, \beta) = e^{-i\beta H_M} e^{-i\gamma H_C} Q e^{i\gamma H_C} e^{i\beta H_M} = Q.
\end{equation}

\noindent This commutation relationship implies that the eigenstructure of $Q$ is preserved under the QAOA evolution. Therefore, any nontrivial operator $Q \in \mathfrak{g}'$ represents a conserved quantity. In particular, if a state $|\psi\rangle$ is an eigenvector of $Q$ with eigenvalue $\lambda_i$, i.e., $Q |\psi\rangle = \lambda_i |\psi\rangle$, then the evolved state $U|\psi\rangle$ satisfies:%
\begin{equation}
    Q (U|\psi\rangle) = U Q |\psi\rangle = \lambda_i (U|\psi\rangle),
\end{equation}

\noindent showing that $U|\psi\rangle$ remains in the same eigenspace associated with $\lambda_i$. Therefore, the eigenvalues $\{\lambda_i\}$ of the conserved quantity $Q$ are conserved throughout the evolution, and quantum states cannot transition between different eigenspaces labeled by these eigenvalues. The joint eigenspaces of $Q$ thus define invariant subspaces of the Hilbert space. Consequently, the full Hilbert space admits the following decomposition:%
\begin{equation}
    \mathcal H \;=\; \bigoplus_{\alpha} \mathcal H_{\alpha},
\end{equation}

\noindent where each subspace $\mathcal{H}_\alpha$ corresponds to a specific eigenvalue of the conserved quantity $Q$, and $\alpha$ indexes the distinct eigenvalues of $Q$. 

To explicitly represent each invariant subspace and show that the dynamics is confined within it, this decomposition can equivalently be expressed through a set of orthogonal projections $\{\Pi_\alpha\}$ onto the invariant subspaces:%
\begin{equation}
    [\Pi_{\alpha}, H_C] = 0, \quad [\Pi_{\alpha}, H_M] = 0,
\end{equation}

\noindent satisfying the completeness and orthogonality relationships $\sum_{\alpha} \Pi_{\alpha} = I$ and $\Pi_{\alpha} \Pi_{\alpha'} = \delta_{\alpha\alpha'} \Pi_{\alpha}$ ($\delta_{\alpha\alpha'}$ is the Kronecker delta). These relationships ensure that the dynamics never induce transitions between different subspaces:%
\begin{equation}
    \Pi_{\alpha'} U(\boldsymbol{\gamma}, \boldsymbol{\beta}) \Pi_{\alpha} = 0 \quad \text{for } \alpha \neq \alpha'.
\end{equation}

\noindent where $U(\boldsymbol{\gamma}, \boldsymbol{\beta}) = \prod_{p=1}^P e^{-i\beta_p H_M} e^{-i\gamma_p H_C}$. This shows that the evolution is strictly confined to the invariant subspace that contains the initial state $|\psi_0\rangle$. Specifically, if $|\psi_0\rangle \in \mathcal{H}_0$, then for any layer $P$ and variational parameters $\{\boldsymbol{\gamma}, \boldsymbol{\beta}\}$, the evolved state $|\psi(\boldsymbol{\gamma}, \boldsymbol{\beta})\rangle$ remains within $\mathcal{H}_0$. We therefore define $\mathcal{H}_{\text{eff}} \triangleq \mathcal{H}_0$ as the effective invariant subspace. This subspace $\mathcal{H}_{\text{eff}}$, rather than the exponentially larger full Hilbert space $\mathcal{H}$, constitutes the minimal state space required for the algorithm.

\subsection{Invariant Subspace Representation of QAOA}

To derive the dynamics within $\mathcal{H}_{\text{eff}}$, we define the effective Hamiltonians as $H_C^{\text{eff}}$ and $H_M^{\text{eff}}$. They can be represented on the original Hilbert space $\mathcal{H}$ via projection onto the effective subspace $\mathcal{H}_{\text{eff}}$:%
\begin{equation}
    H_C^{\text{eff}} \triangleq \Pi_{\text{eff}} H_C \Pi_{\text{eff}}, \quad H_M^{\text{eff}} \triangleq \Pi_{\text{eff}} H_M \Pi_{\text{eff}}.
\end{equation}

\noindent Observe from Eq.~(14) that $[H_C,\Pi_{\text{eff}}]=0$ and $[H_M,\Pi_{\text{eff}}]=0$, these projected operators act identically to the original Hamiltonians within $\mathcal{H}_{\text{eff}}$, while being zero on all other subspaces. The full-space evolution is governed by the unitary operator $U(\boldsymbol{\gamma}, \boldsymbol{\beta})$, from which we define a subspace unitary $U_{\text{eff}}(\boldsymbol{\gamma}, \boldsymbol{\beta})$ acting on $\mathcal{H}_{\text{eff}}$:%
\begin{equation}
    U_{\text{eff}}(\boldsymbol{\gamma}, \boldsymbol{\beta}) \triangleq \prod_{p=1}^P e^{-i\beta_p H_M^{\text{eff}}} e^{-i\gamma_p H_C^{\text{eff}}}.
\end{equation}

To ensure that the QAOA dynamics remains confined within the effective subspace, we first prove the invariance of the projector.

\begin{lemma}
    Let $\Pi$ be an orthogonal projector satisfying $[\Pi,H_C]=0$ and $[\Pi,H_M]=0$. Then for any parameters $\gamma,\beta$, the projector commutes with the full QAOA unitary evolution:%
    \begin{equation}
        [\Pi, U(\boldsymbol{\gamma}, \boldsymbol{\beta})] = 0.
    \end{equation}
\end{lemma}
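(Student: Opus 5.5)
The plan is to pass commutation from the Hamiltonians to their exponentials, and then from the single-layer factors to the full product. The argument is finite-dimensional linear algebra, since $\mathcal H=(\mathbb C^2)^{\otimes n}$ means all operators are $2^n\times 2^n$ matrices.

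\emph{Exponentials.} Let $A\in\{H_C,H_M\}$ with $[\Pi,A]=0$. By induction on $k$, $\Pi A^k = A\Pi A^{k-1}=\cdots = A^k\Pi$ for every $k\ge 0$. The series $e^{-itA}=\sum_{k\ge0}\frac{(-it)^k}{k!}A^k$ converges absolutely in operator norm, and left and right multiplication by $\Pi$ are continuous linear maps. Applying $\Pi$ termwise therefore gives $[\Pi,e^{-itA}]=0$ for every real $t$. Taking $t=\gamma_p$ and $t=\beta_p$ yields $[\Pi,e^{-i\gamma_p H_C}]=0$ and $[\Pi,e^{-i\beta_p H_M}]=0$ for every layer $p$.

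An alternative route is spectral. Because $H_C$ and $H_M$ are Hermitian and commute with $\Pi$, each of them leaves both $\mathrm{ran}\,\Pi$ and $\ker\Pi$ invariant. In a basis adapted to $\mathrm{ran}\,\Pi\oplus\ker\Pi$ they are therefore block diagonal, so their exponentials are also block diagonal and commute with $\Pi=I\oplus 0$.

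\emph{Products.} The commutant of $\Pi$, namely $\{B:[\Pi,B]=0\}$, is closed under multiplication: if $\Pi B_1=B_1\Pi$ and $\Pi B_2=B_2\Pi$, then $\Pi B_1B_2=B_1\Pi B_2=B_1B_2\Pi$. Applying this by induction over the $2P$ factors of $U(\boldsymbol\gamma,\boldsymbol\beta)=\prod_{p=1}^P e^{-i\beta_pH_M}e^{-i\gamma_pH_C}$ gives $[\Pi,U(\boldsymbol\gamma,\boldsymbol\beta)]=0$. This holds for every $P$ and all real parameters, which proves the lemma.

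I do not expect a genuine obstacle. The only step that needs care is the exchange of $\Pi$ with the infinite exponential series, and continuity of multiplication in finite dimensions settles it.

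I would also record the consequence used next in the paper, Eq.~(14). If $\Pi_\alpha$ and $\Pi_{\alpha'}$ are orthogonal projectors with $\alpha\neq\alpha'$, then
\[
\Pi_{\alpha'}U\Pi_\alpha=\Pi_{\alpha'}\Pi_\alpha U=\delta_{\alpha\alpha'}\Pi_\alpha U=0 .
\]
This is what confines the evolution to $\mathcal H_{\text{eff}}$.
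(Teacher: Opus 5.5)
Your proof is correct and follows the paper's argument essentially step by step: induction to get $\Pi H^k = H^k \Pi$, termwise commutation through the exponential series, and then propagation of the commutation through the layer-by-layer product. Your note on norm convergence of the series and your spectral block-diagonal alternative are sound additions that the paper's proof does not spell out.
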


The detailed proof is provided in Appendix~A.

Based on the projector invariance in Lemma 1, the following theorem shows the equivalence between the full-space and reduced-space dynamics.

\begin{theorem}
    Assume the initial state $|\psi_0\rangle$ is contained within the effective subspace, i.e., $\Pi_{\mathrm{eff}}|\psi_0\rangle = |\psi_0\rangle$. Let $|\psi\rangle = U|\psi_0\rangle \in \mathcal{H}$ be the full-space evolved state and $|\psi_{\mathrm{eff}}\rangle = U_{\mathrm{eff}}|\psi_0\rangle \in \mathcal{H}_{\mathrm{eff}}$ be the reduced-space evolved state. Then:

    \begin{enumerate}
        \item \textbf{State equivalence:} The state trajectory is confined to the effective subspace, $|\psi\rangle \in \mathcal{H}_{\mathrm{eff}}$, and $|\psi\rangle = |\psi_{\mathrm{eff}}\rangle$.

        \item \textbf{Observable equivalence:} For any observable $O$ that commutes with the subspace projector ($[O, \Pi_{\mathrm{eff}}] = 0$), the expectation values of $O$ satisfy:%
            \begin{equation}
                \langle\psi|O|\psi\rangle = \langle\psi_{\mathrm{eff}}| O|_{\mathcal{H}_{\mathrm{eff}}} |\psi_{\mathrm{eff}}\rangle.
            \end{equation}
            
        \item \textbf{Measurement equivalence:} For any solution $x \in \{0,1\}^n$, the probability distributions are identical:%
        \begin{equation}
            \Pr_{\mathrm{full}}(x) = \Pr_{\mathrm{eff}}(x).
        \end{equation}

        \noindent Specifically, if the basis state $|x\rangle$ is orthogonal to $\mathcal{H}_{\mathrm{eff}}$, then we have $\Pr_{\mathrm{full}}(x) = 0$.
    \end{enumerate}
\end{theorem}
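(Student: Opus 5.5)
The plan is to derive all three claims from Lemma~1 together with one identity: for any Hamiltonian $H$ with $[H,\Pi_{\mathrm{eff}}]=0$ and any real $t$,
\begin{equation*}
e^{-itH}\Pi_{\mathrm{eff}} = e^{-itH^{\mathrm{eff}}}\Pi_{\mathrm{eff}}, \qquad H^{\mathrm{eff}}=\Pi_{\mathrm{eff}}H\Pi_{\mathrm{eff}}.
\end{equation*}

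To prove the identity, note that commutation gives $H^{\mathrm{eff}} = H\Pi_{\mathrm{eff}} = \Pi_{\mathrm{eff}}H$. Induction then gives $(H^{\mathrm{eff}})^k = H^k\Pi_{\mathrm{eff}}$ for every $k\ge 1$. Hence $(H^{\mathrm{eff}})^k\Pi_{\mathrm{eff}} = H^k\Pi_{\mathrm{eff}}$ for all $k\ge 0$; the case $k=0$ uses $\Pi_{\mathrm{eff}}^2=\Pi_{\mathrm{eff}}$. Summing the exponential series term by term yields the identity. One caveat must be handled: on $\mathcal{H}_{\mathrm{eff}}^\perp$, $e^{-itH^{\mathrm{eff}}}$ acts as the identity, not as zero. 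Therefore $U_{\mathrm{eff}}$, viewed as an operator on all of $\mathcal H$, agrees with $U$ only after right-multiplication by $\Pi_{\mathrm{eff}}$. The statement only concerns $U_{\mathrm{eff}}|\psi_0\rangle$ with $|\psi_0\rangle\in\mathcal H_{\mathrm{eff}}$, so this is sufficient.

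For \emph{state equivalence}, I would telescope through the $2P$ factors of the product. Each factor $e^{-i\beta_pH_M}$ or $e^{-i\gamma_pH_C}$ commutes with $\Pi_{\mathrm{eff}}$, which follows from the series argument or directly from Lemma~1 applied to a single factor. So we can repeatedly move $\Pi_{\mathrm{eff}}$ leftward and replace each full factor by its effective counterpart. This gives $U\Pi_{\mathrm{eff}} = U_{\mathrm{eff}}\Pi_{\mathrm{eff}}$. Applying this to $|\psi_0\rangle=\Pi_{\mathrm{eff}}|\psi_0\rangle$ gives $|\psi\rangle = |\psi_{\mathrm{eff}}\rangle$. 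Lemma~1 gives $\Pi_{\mathrm{eff}}U|\psi_0\rangle = U\Pi_{\mathrm{eff}}|\psi_0\rangle = |\psi\rangle$, so $|\psi\rangle\in\mathcal H_{\mathrm{eff}}$. I expect this bookkeeping step, with the exponential-series identity and the caveat about the complement, to be the only real obstacle; everything else is immediate.

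For \emph{observable equivalence}, $[O,\Pi_{\mathrm{eff}}]=0$ means $O$ maps $\mathcal H_{\mathrm{eff}}$ into itself, so $O|_{\mathcal H_{\mathrm{eff}}}=\Pi_{\mathrm{eff}}O\Pi_{\mathrm{eff}}$ is well defined. Using $|\psi\rangle=\Pi_{\mathrm{eff}}|\psi\rangle=|\psi_{\mathrm{eff}}\rangle$,
\begin{equation*}
\langle\psi|O|\psi\rangle=\langle\psi|\Pi_{\mathrm{eff}}O\Pi_{\mathrm{eff}}|\psi\rangle=\langle\psi_{\mathrm{eff}}|O|_{\mathcal H_{\mathrm{eff}}}|\psi_{\mathrm{eff}}\rangle.
\end{equation*}

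For \emph{measurement equivalence}, the outcome probabilities are $\Pr_{\mathrm{full}}(x)=|\langle x|\psi\rangle|^2$ and $\Pr_{\mathrm{eff}}(x)=|\langle x|\psi_{\mathrm{eff}}\rangle|^2$. These coincide because the two states are equal. If $|x\rangle\perp\mathcal H_{\mathrm{eff}}$, then $\Pi_{\mathrm{eff}}|x\rangle=0$, so $\langle x|\psi\rangle=\langle x|\Pi_{\mathrm{eff}}|\psi\rangle=0$ and $\Pr_{\mathrm{full}}(x)=0$. Alternatively, one can take $O=|x\rangle\langle x|$ in part~2 whenever it commutes with $\Pi_{\mathrm{eff}}$, but the direct computation avoids needing that hypothesis.
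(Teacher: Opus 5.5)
Your proof is correct and follows essentially the same route as the paper. Confinement comes from Lemma~1; a term-by-term exponential-series argument shows the full and effective exponentials agree on $\mathcal H_{\mathrm{eff}}$, followed by induction over the layers; observables are handled by inserting $\Pi_{\mathrm{eff}}$; and measurements by the direct amplitude computation. Your operator identity $e^{-itH}\Pi_{\mathrm{eff}}=e^{-itH^{\mathrm{eff}}}\Pi_{\mathrm{eff}}$, with the explicit $k=0$ case and the remark that $e^{-itH^{\mathrm{eff}}}$ acts as the identity (not zero) on $\mathcal H_{\mathrm{eff}}^\perp$, simply makes precise what the paper states informally as ``$H_C^k$ acts identically to $(H_C^{\mathrm{eff}})^k$ on the invariant subspace.''
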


The detailed proof is provided in Appendix~B.

Theorem 1 rigorously proves the equivalence between the original system $(\mathcal{H}, H_C, H_M, |\psi_0\rangle)$ and the minimal effective system $(\mathcal{H}_{\mathrm{eff}}, H_C^{\mathrm{eff}}, H_M^{\mathrm{eff}}, |\psi_0\rangle)$. It demonstrates that this reduction is exact, meaning that the QAOA in subspace reproduces the full QAOA's dynamics and statistics. Consequently, the QAOA dynamics can be fully represented within $\mathcal{H}_{\mathrm{eff}}$, which defines the minimal Hilbert space of QAOA.

\subsection{Qubit Reduction via Isometric Mapping}

Since the QAOA dynamics are confined within the effective invariant subspace $\mathcal{H}_{\mathrm{eff}}$, it is unnecessary to represent the full Hilbert space. To reduce the number of qubits, we construct an isometric mapping to encode $\mathcal{H}_{\mathrm{eff}}$ into a reduced qubit space. The dimension of $\mathcal{H}_{\mathrm{eff}}$ is denoted by $M \triangleq \dim(\mathcal{H}_{\mathrm{eff}})$, which is the number of basis states needed to represent the $\mathcal{H}_{\mathrm{eff}}$. To construct the isometric mapping, we choose an orthonormal basis for $\mathcal{H}_{\mathrm{eff}}$, denoted by $\{|\phi_k\rangle\}_{k=0}^{M-1}$. The effective subspace can therefore be written as $\mathcal{H}_{\mathrm{eff}}
= \mathrm{span}\{|\phi_k\rangle\}_{k=0}^{M-1}$. To represent $\mathcal{H}_{\mathrm{eff}}$ with a reduced number of qubits, we determine the minimal number of qubits required. Since a quantum system having $m$ qubits can represent a Hilbert space of dimension $2^m$, the minimal number of qubits required for representing $\mathcal{H}_{\mathrm{eff}}$ is determined by $2^m \ge M$, leading to $m \triangleq \lceil \log_2 M \rceil$. In this reduced representation, each orthonormal basis vector $|\phi_k\rangle$ of the physical subspace $\mathcal{H}_{\mathrm{eff}}$ is mapped to a corresponding orthonormal computational basis state $|\tilde{\phi} _k\rangle$ of $m$-qubit. These states span the effective reduced-qubit space:%
\begin{equation}
    \tilde{\mathcal H}_{\mathrm{eff}}=
\mathrm{span}\{|\tilde{\phi}_k\rangle\}_{k=0}^{M-1}
\subseteq
\tilde{\mathcal H}.
\end{equation}

To formally bridge these two spaces, we define an isometric mapping $V : \tilde{\mathcal H}_{\mathrm{eff}} \to \mathcal H_{\mathrm{eff}}$. This operator maps states in $\tilde{\mathcal H}_{\mathrm{eff}}$ to the original subspace $\mathcal H_{\mathrm{eff}}$, while $V^\dagger$ performs the reverse mapping. Given the orthonormal bases $\{|\phi_k\rangle\}_{k=0}^{M-1}$ for $\mathcal H_{\mathrm{eff}}$ and $\{|\tilde{\phi}_k\rangle\}_{k=0}^{M-1}$ for $\tilde{\mathcal H}_{\mathrm{eff}}$, the isometry is defined as:%
\begin{equation}
    V \triangleq \sum_{k=0}^{M-1} |\phi_k\rangle \langle \tilde{\phi}_k|.
\end{equation}

The following lemma summarizes the properties of the mapping, which enable the reconstruction of QAOA dynamics in the reduced-qubit space.

\begin{lemma}
    The operator $V$ satisfies the following identities:%
    \begin{equation}
       V^\dagger V = I_{\tilde{\mathcal{H}}_{\mathrm{eff}}}, \quad VV^\dagger = \Pi_{\mathrm{eff}}.
    \end{equation}
\end{lemma}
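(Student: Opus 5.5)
The plan is a direct computation from the definition $V=\sum_{k=0}^{M-1}|\phi_k\rangle\langle\tilde\phi_k|$, together with orthonormality of the two bases. The adjoint is $V^\dagger=\sum_{k}|\tilde\phi_k\rangle\langle\phi_k|$, since the adjoint of a rank-one operator $|a\rangle\langle b|$ is $|b\rangle\langle a|$ and the adjoint is antilinear. Both identities then reduce to evaluating double sums of outer products.

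For the first identity, expand
\begin{equation*}
V^\dagger V=\sum_{k,l}|\tilde\phi_k\rangle\langle\phi_k|\phi_l\rangle\langle\tilde\phi_l| .
\end{equation*}
Orthonormality of $\{|\phi_k\rangle\}$ gives $\langle\phi_k|\phi_l\rangle=\delta_{kl}$, so $V^\dagger V=\sum_k|\tilde\phi_k\rangle\langle\tilde\phi_k|$. This is a sum of orthogonal rank-one projectors onto an orthonormal basis of $\tilde{\mathcal H}_{\mathrm{eff}}$, so it equals $I_{\tilde{\mathcal H}_{\mathrm{eff}}}$. To see this, apply it to an arbitrary $|\tilde\chi\rangle=\sum_k c_k|\tilde\phi_k\rangle$ and recover $|\tilde\chi\rangle$.

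For the second identity, the same expansion gives
\begin{equation*}
VV^\dagger=\sum_{k,l}|\phi_k\rangle\langle\tilde\phi_k|\tilde\phi_l\rangle\langle\phi_l| .
\end{equation*}
Orthonormality of the computational basis states $\{|\tilde\phi_k\rangle\}$ reduces this to $\sum_k|\phi_k\rangle\langle\phi_k|$. It remains to identify this operator with $\Pi_{\mathrm{eff}}$, the orthogonal projector onto $\mathcal H_{\mathrm{eff}}$. I would check the two defining properties. First, it is Hermitian and idempotent, which follows from orthonormality. Second, its range is exactly $\mathrm{span}\{|\phi_k\rangle\}=\mathcal H_{\mathrm{eff}}$: it fixes every $|\phi_j\rangle$ and annihilates every vector orthogonal to $\mathcal H_{\mathrm{eff}}$. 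Since an orthogonal projector is uniquely determined by its range, this gives $VV^\dagger=\Pi_{\mathrm{eff}}$.

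The only step needing care, though it is hardly an obstacle, is the domain bookkeeping. The identity $V^\dagger V=I$ holds on $\tilde{\mathcal H}_{\mathrm{eff}}$ and not on all of $\tilde{\mathcal H}$ when $M<2^m$. If $V$ is regarded as acting on the full $m$-qubit space $\tilde{\mathcal H}$, then $V^\dagger V$ is instead the projector onto $\tilde{\mathcal H}_{\mathrm{eff}}$. Likewise, $VV^\dagger$ is viewed as an operator on the full $\mathcal H$, where it coincides with the $\Pi_{\mathrm{eff}}$ of Eq.~(14). I would state both conventions explicitly so that the identities remain consistent with how they are later used to transport $H_C^{\mathrm{eff}}$ and $H_M^{\mathrm{eff}}$ to the reduced space.
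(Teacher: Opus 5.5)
Your proposal is correct and follows the paper's proof essentially verbatim. Like the paper, you expand $V^\dagger V$ and $VV^\dagger$ as double sums, collapse them using $\langle\phi_k|\phi_j\rangle=\delta_{kj}$ and $\langle\tilde\phi_k|\tilde\phi_j\rangle=\delta_{kj}$, and identify the results $\sum_k|\tilde\phi_k\rangle\langle\tilde\phi_k|$ and $\sum_k|\phi_k\rangle\langle\phi_k|$ with $I_{\tilde{\mathcal H}_{\mathrm{eff}}}$ and $\Pi_{\mathrm{eff}}$. Your extra remarks make explicit what the paper takes for granted: you justify the last identification by uniqueness of the orthogonal projector with a given range, and you note that on the full $m$-qubit space $V^\dagger V$ is only the projector onto $\tilde{\mathcal H}_{\mathrm{eff}}$ when $M<2^m$.
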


The detailed proof is provided in Appendix~C.

Given the isometry $V$, we map the restricted Hamiltonians from the $n$-qubit space to the minimal $m$-qubit space. We define the induced Hamiltonians on $\tilde{\mathcal{H}}_{\mathrm{eff}}$ as:%
\begin{equation}
    \tilde{H}_C^{\mathrm{eff}} \triangleq V^\dagger H_C V, \quad \tilde{H}_M^{\mathrm{eff}} \triangleq V^\dagger H_M V.
\end{equation}

\noindent The Hamiltonians on $\tilde{\mathcal H}_{\mathrm{eff}}$ behave exactly the same as the original Hamiltonians within the effective subspace. In the following, we prove that the evolution induced by these Hamiltonians, when mapped via $V$, exactly reproduces the original QAOA trajectory.

\begin{theorem}
    For any parameters $\gamma, \beta \in \mathbb{R}$ and the isometric mapping $V$, the following operator identities hold:%
    \begin{equation}
        e^{-i\gamma H_C} V = V e^{-i\gamma \tilde{H}_C^{\mathrm{eff}}}, \quad e^{-i\beta H_M} V = V e^{-i\beta \tilde{H}_M^{\mathrm{eff}}}.
    \end{equation}

    \noindent Consequently, for a QAOA circuit with $P$ layers and parameters $\{\boldsymbol{\gamma}, \boldsymbol{\beta}\}$, the total evolution operators $U(\boldsymbol{\boldsymbol{\gamma}, \boldsymbol{\beta}})$ and $\tilde{U}(\boldsymbol{\gamma}, \boldsymbol{\beta})$ satisfy:%
    \begin{equation}
        U(\boldsymbol{\gamma}, \boldsymbol{\beta}) V = V \tilde{U}(\boldsymbol{\gamma}, \boldsymbol{\beta}).
    \end{equation}
\end{theorem}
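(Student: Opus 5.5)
The plan is to first prove the single-generator identity $HV = V\tilde H$ for $H \in \{H_C, H_M\}$, where $\tilde H = V^\dagger H V$. The identities for exponentials then follow from power series, and the full $P$-layer identity follows by induction on the layers. Everything rests on Lemma~2 ($V^\dagger V = I_{\tilde{\mathcal H}_{\mathrm{eff}}}$, $VV^\dagger = \Pi_{\mathrm{eff}}$) and on the commutation $[\Pi_{\mathrm{eff}}, H] = 0$ from Eq.~(14).

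\textbf{Step 1 (intertwining of generators).} Since every $|\phi_k\rangle$ lies in $\mathcal H_{\mathrm{eff}}$, we have $\Pi_{\mathrm{eff}} V = V$. Hence
\[
V\tilde H = VV^\dagger H V = \Pi_{\mathrm{eff}} H V = H\Pi_{\mathrm{eff}} V = HV,
\]
where the third equality uses $[\Pi_{\mathrm{eff}}, H] = 0$. This is the main point where invariance of the subspace enters. Without that commutation, $HV$ could leave $\mathcal H_{\mathrm{eff}}$, and the identity would fail. I would therefore state explicitly that $\Pi_{\mathrm{eff}}$ is one of the projectors $\Pi_\alpha$ of Eq.~(14). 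Along the way I would also note that $\tilde H$ is Hermitian, so $e^{-i\gamma\tilde H}$ is a unitary on $\tilde{\mathcal H}_{\mathrm{eff}}$.

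\textbf{Step 2 (exponentials).} By induction on $k$, $H^k V = V\tilde H^k$: indeed $H^{k+1}V = H(V\tilde H^k) = (HV)\tilde H^k = V\tilde H^{k+1}$. Summing the absolutely convergent series $e^{-i\gamma H} = \sum_k (-i\gamma)^k H^k/k!$, which is legitimate in finite dimensions, gives $e^{-i\gamma H}V = Ve^{-i\gamma\tilde H}$. Applying this with $H = H_C$ and $H = H_M$ yields the first pair of identities of the theorem. Alternatively, one could diagonalize $H$ restricted to $\mathcal H_{\mathrm{eff}}$, but the series argument is cleaner.

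\textbf{Step 3 (full circuit).} Define $\tilde U(\boldsymbol\gamma,\boldsymbol\beta) = \prod_{p=1}^P e^{-i\beta_p\tilde H_M^{\mathrm{eff}}}e^{-i\gamma_p\tilde H_C^{\mathrm{eff}}}$, with the same ordering as $U$. Then push $V$ from right to left through each factor of $U$ using Step 2. Formally, I would induct on $P$: if $U_{P-1}V = V\tilde U_{P-1}$, then
\[
e^{-i\beta_P H_M}e^{-i\gamma_P H_C}U_{P-1}V = e^{-i\beta_P H_M}e^{-i\gamma_P H_C}V\tilde U_{P-1} = V e^{-i\beta_P\tilde H_M^{\mathrm{eff}}}e^{-i\gamma_P\tilde H_C^{\mathrm{eff}}}\tilde U_{P-1}.
\]
The only care needed is to keep the product ordering consistent between $U$ and $\tilde U$. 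Step 1 is the conceptual crux; the remaining steps are routine.
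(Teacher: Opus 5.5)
Your proposal is correct and follows essentially the same route as the paper: both derive $HV = V\tilde H$ from $VV^\dagger = \Pi_{\mathrm{eff}}$ together with the invariance of $\mathcal H_{\mathrm{eff}}$ under $H$, lift it to powers via the exponential series, and then chain the identity through the $P$ layers. Your version simply makes explicit two things the paper leaves implicit: the step $\Pi_{\mathrm{eff}} H V = H \Pi_{\mathrm{eff}} V = HV$ (which the paper phrases as ``$H_C$ stays within the effective subspace''), and the ordered induction over the layers.
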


The detailed proof is provided in Appendix~D.

\begin{corollary}
    Based on the Theorem~1, for any reduced initial state $|\tilde\psi_0\rangle$ with $|\psi_0\rangle = V|\tilde\psi_0\rangle$, the dynamics generated in the reduced $m$-qubit space are equivalent to the original QAOA dynamics in $\mathcal H$. Consequently, the reduced system reproduces the same state trajectory, expectation values, and measurement statistics as the full system.
\end{corollary}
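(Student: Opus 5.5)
The plan is to combine the intertwining relation of Theorem~2 with the three equivalences of Theorem~1.

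\emph{State trajectory.} Define the reduced evolved state $|\tilde\psi\rangle \triangleq \tilde U(\boldsymbol{\gamma},\boldsymbol{\beta})|\tilde\psi_0\rangle$. Theorem~2 gives $U V = V\tilde U$. Applying both sides to $|\tilde\psi_0\rangle$ yields
\begin{equation*}
V|\tilde\psi\rangle = V\tilde U|\tilde\psi_0\rangle = U V|\tilde\psi_0\rangle = U|\psi_0\rangle = |\psi\rangle .
\end{equation*}
The same identity holds after each layer, because Theorem~2 holds factor by factor. Hence the whole trajectory is reproduced under $V$.

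By Lemma~2, $VV^\dagger=\Pi_{\mathrm{eff}}$, so $|\psi_0\rangle = V|\tilde\psi_0\rangle$ satisfies $\Pi_{\mathrm{eff}}|\psi_0\rangle=|\psi_0\rangle$. The hypothesis of Theorem~1 therefore holds. Theorem~1 then gives $|\psi\rangle=|\psi_{\mathrm{eff}}\rangle\in\mathcal H_{\mathrm{eff}}$, and so $|\psi\rangle = V|\tilde\psi\rangle$ equals the effective-subspace trajectory as well. Conversely, $V^\dagger V = I$ gives $|\tilde\psi\rangle = V^\dagger|\psi\rangle$, so the reduced state loses no information.

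\emph{Expectation values.} For an observable $O$ with $[O,\Pi_{\mathrm{eff}}]=0$, define the induced observable $\tilde O \triangleq V^\dagger O V$. Then
\begin{equation*}
\langle\tilde\psi|\tilde O|\tilde\psi\rangle=\langle\tilde\psi|V^\dagger O V|\tilde\psi\rangle=\langle\psi|O|\psi\rangle .
\end{equation*}
Combined with the observable equivalence of Theorem~1, this matches $\langle\psi_{\mathrm{eff}}|O|_{\mathcal H_{\mathrm{eff}}}|\psi_{\mathrm{eff}}\rangle$. Taking $O=H_C$ shows that the QAOA cost landscape is identical in the two systems. The variational optimizer therefore sees the same objective function and returns the same optimal parameters.

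\emph{Measurement statistics.} For each $x\in\{0,1\}^n$, apply the observable result to the projector $O=|x\rangle\langle x|$. This gives
\begin{equation*}
\Pr_{\mathrm{full}}(x)=|\langle x|\psi\rangle|^2=|\langle x|V|\tilde\psi\rangle|^2 .
\end{equation*}
So the original distribution is obtained by measuring the reduced state in the rotated basis $\{V^\dagger|x\rangle\}$.

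The main obstacle is that $|x\rangle\langle x|$ need not commute with $\Pi_{\mathrm{eff}}$. I would not rely on Theorem~1's observable clause here. Instead I would use the direct identity above, which needs only $|\psi\rangle=V|\tilde\psi\rangle$, and then invoke Theorem~1's measurement clause to identify it with $\Pr_{\mathrm{eff}}(x)$. For $|x\rangle\perp\mathcal H_{\mathrm{eff}}$, the identity $V^\dagger|x\rangle=V^\dagger\Pi_{\mathrm{eff}}|x\rangle=0$ recovers the zero-probability statement.

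In the common case where $\mathcal H_{\mathrm{eff}}$ is spanned by computational basis states and the $|\phi_k\rangle$ are chosen to be those states, $V^\dagger|x\rangle$ is itself a basis state $|\tilde\phi_k\rangle$. Measuring the $m$ qubits in the computational basis then reproduces the distribution exactly under the relabeling $k\leftrightarrow x$.
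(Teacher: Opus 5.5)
Your proposal is correct and takes essentially the same three-part route as the paper's proof: state matching from the intertwining relation of Theorem~2, observables via $\tilde O=V^\dagger O V$, and measurement via $\langle x|\psi\rangle=\langle x|V|\tilde\psi\rangle$ with the computational-basis special case; your state step is slightly more direct, since you apply $UV=V\tilde U$ to $|\tilde\psi_0\rangle$ instead of going through $|\psi\rangle=VV^\dagger|\psi\rangle$ and $V^\dagger U=\tilde U V^\dagger$, and you explicitly check Theorem~1's hypothesis via $\Pi_{\mathrm{eff}}V=VV^\dagger V=V$. One wording fix: outside the computational-basis case, $\{V^\dagger|x\rangle\}_x$ is not a basis but defines the POVM $\{V^\dagger|x\rangle\langle x|V\}_x$ (which sums to $V^\dagger V=I$), so ``rotated basis'' is only literally correct in the case you treat at the end.
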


The detailed proof is provided in Appendix~E.

Since the measurement statistics are preserved exactly, the quantum evolution can be executed entirely within the $m$-qubit space by replacing the original Hamiltonians with their induced counterparts:%
\begin{equation}
    |\tilde\psi(\boldsymbol{\gamma}, \boldsymbol{\beta})\rangle = \prod_{p=1}^{P} e^{-i\beta_p \tilde H_M^{\mathrm{eff}}} e^{-i\gamma_p \tilde H_C^{\mathrm{eff}}} |\tilde\psi_0\rangle.
\end{equation}

\noindent Under this scheme, the quantum state in the original problem space is recovered, as $|\psi(\boldsymbol{\gamma}, \boldsymbol{\beta})\rangle = V|\tilde\psi(\boldsymbol{\gamma}, \boldsymbol{\beta})\rangle$. This allows the QAOA evolution to be executed entirely in the reduced space, while preserving the original optimization behavior.

\section{Applicability and Scalability Analysis of EQE-QAOA}

To evaluate the applicability of the proposed EQE-QAOA, we first derive the structural limits under which qubit reduction is not possible. Based on this, we then derive the conditions under which EQE-QAOA becomes applicable. We further analyze both the qubit reduction and computational complexity to quantify qubit savings and speedup. Finally, we show that EQE-QAOA imposes a much lower computational complexity than the original QAOA.

\subsection{Problem Conditions for Qubit Reducibility}

\subsubsection{Irreducible Problem Structures}

We derive the conditions under which the problem is irreducible, which occurs when it is \textit{unconstrained} and all variables are \textit{independent}. In such cases, no variable can be represented as a function of others (e.g., $x_1 = x_2$ or $x_3 = x_1 \oplus x_2$). As a result, no solutions can be merged, and every solution in the space must be evaluated individually. Consequently, the QAOA evolution spans the entire solution space. This indicates that it cannot be decomposed into smaller invariant subspaces. From a quantum dynamical perspective, this irreducible structure implies that the QAOA evolution does not preserve any nontrivial symmetry. In other words, the only operators that commute with both $H_C$ and $H_M$ are scalar multiples of the identity, so the commutant of the QAOA Hamiltonians is given by:%
\begin{equation}
    \{A \in \mathrm{End}(\mathcal H) : [A,H_C]=[A,H_M]=0\} = \mathbb C I.
\end{equation}

\noindent In this case, only trivial conserved quantities exist, and the Hilbert space admits no invariant decomposition. Consequently, no compressible redundancy is present, and qubit reduction is not possible.

\subsubsection{Reducible Problem Structures}

Based on the above irreducibility condition, we now derive the corresponding condition for qubit reduction. In the irreducible case shown in Eq.~(28), the commutant of the QAOA generators is trivial, it contains only scalar multiples of the identity. Therefore, the reducible case is derived by the opposite condition, where the commutant of the QAOA generators is nontrivial:%
\begin{equation}
    \{A : [A,H_C]=[A,H_M]=0\} \neq \mathbb C I,
\end{equation}

\noindent which implies the existence of a nontrivial Hermitian operator that commutes with both $H_C$ and $H_M$. Such an operator acts as a conserved quantity and induces a decomposition of the Hilbert space into invariant subspaces. 

This reducible condition arises when redundancy exists in the solution space, implying that some solutions can be ruled out or merged, thereby restricting the problem to a smaller subspace. For optimization problems, this typically occurs under two conditions. Firstly, constraints introduce coupling among variables. For example, constraints such as $\sum_i x_i = k$ restrict the feasible solutions to a subspace. As a result, many states in the original $2^n$ Hilbert space become unreachable, leading to a form of spatial redundancy. In this case, the constraint can be represented by a Hermitian operator $Q$ that is preserved under the QAOA dynamics, inducing an invariant subspace. Secondly, symmetry introduces equivalence among solutions. If the objective function is invariant under a nontrivial permutation group $\mathcal S_f$ with $|\mathcal S_f| > 1$, multiple solutions become equivalent. This induces redundancy, allowing equivalent states to be merged into a reduced representation. In both cases, structure restricts the dynamics to a subspace of the Hilbert space, hence enabling qubit reduction. The problem conditions are summarized in Table~\ref{tab:structural_classification}.

\begin{table}[t]
\centering
\setlength{\tabcolsep}{5pt}
\renewcommand{\arraystretch}{1.2}
\caption{Problem Conditions of Qubit Reducibility.}
\label{tab:structural_classification}
\begin{tabular}{p{3.5cm} c c}
\hline
\textbf{Problem Condition} & \textbf{Algebraic Criterion} & \textbf{Reducibility} \\
\hline
Unconstrained \& independent variables & $\mathfrak{g}' = \mathbb{C}I$ & \textbf{Irreducible} \\
\hline
Constraints & $Q \in \mathfrak{g}' \setminus \mathbb{C}I$ & \textbf{Reducible} \\
Permutation symmetries & $|\mathcal{S}_f| > 1$ & \textbf{Reducible} \\
\hline
\end{tabular}
\end{table}

\medskip
Note that EQE-QAOA applies whenever the problem is not unconstrained and the variables are not independent. Our proposed EQE-QAOA can be applied to reduce the number of qubits, when constraints or symmetry are present, as invariant subspaces arise. In practical settings, such structures are very common in engineering optimization problems. Most real-world problems involve constraints or variable dependencies, such as resource allocation with budget or capacity limits, communication networks with repeated topology patterns, and scheduling or routing tasks with coupled decisions. These structures naturally induce invariant subspaces, making EQE-QAOA applicable to a wide range of practical problems. By exploiting this structure, EQE-QAOA reduces the number of required qubits without affecting the original optimization performance, enabling larger problems to be handled on near-term quantum devices and improving the efficiency of classical simulations.

\subsection{Resource and Complexity Analysis}

To evaluate the practical impact of EQE-QAOA, in Theorem~3, we derive the fundamental relationship between the invariant subspace dimension $M$ and the required number of qubits $m$. Corollary~2 further shows that the proposed EQE-QAOA requires fewer qubits.

\begin{theorem}
    Let $M = \dim(\mathcal{H}_{\mathrm{eff}})$. Any quantum system consisting of $m$ qubits has a total dimension of $2^m$. If the $\mathcal{H}_{\mathrm{eff}}$ is mapped into an $m$-qubit space $\tilde{\mathcal H}_{\mathrm{eff}}$ without losing information, the following must hold:%
    \begin{equation}
        2^m \ge M \quad\Longleftrightarrow\quad m \ge \lceil\log_2 M\rceil.
    \end{equation}

    \noindent Moreover, $m = \lceil\log_2 M\rceil$ qubits are sufficient to represent the system.
\end{theorem}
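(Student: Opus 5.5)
The plan is to make ``mapped without losing information'' precise as the existence of a linear isometry, that is, an injective linear map $W:\mathcal H_{\mathrm{eff}}\to(\mathbb C^2)^{\otimes m}$ with $W^\dagger W = I_{\mathcal H_{\mathrm{eff}}}$. The proof then has two directions, necessity and sufficiency, plus a short arithmetic step for the equivalence of the two inequalities.

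For \textbf{necessity}, suppose such a lossless encoding exists. Injectivity of a linear map implies $\dim \mathrm{Im}(W) = \dim \mathcal H_{\mathrm{eff}} = M$ by rank--nullity. Since $\mathrm{Im}(W)\subseteq(\mathbb C^2)^{\otimes m}$, whose dimension is $2^m$, we get $M\le 2^m$. An equivalent way to see this is that the orthonormal basis $\{|\phi_k\rangle\}_{k=0}^{M-1}$ is mapped to $M$ orthonormal, hence linearly independent, vectors in a $2^m$-dimensional space. One should also argue that any information-preserving encoding must be injective. If two distinct states had the same image, no subsequent operation could distinguish them, so the original measurement statistics of Theorem~1 could not be recovered. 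This interpretive step is the only non-routine point, and I would settle it by taking injectivity (isometry) as the definition of lossless.

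For the \textbf{equivalence} $2^m\ge M \Longleftrightarrow m\ge\lceil\log_2 M\rceil$, the argument uses that $m$ is an integer and $\log_2$ is strictly increasing. The inequality $2^m\ge M$ holds iff $m\ge \log_2 M$. For integer $m$, this holds iff $m\ge\lceil\log_2 M\rceil$, because $\lceil\log_2 M\rceil$ is the smallest integer that is at least $\log_2 M$.

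For \textbf{sufficiency}, take $m=\lceil\log_2 M\rceil$. Then $2^m\ge M$, so we can choose $M$ distinct computational basis states $|\tilde\phi_k\rangle$, for example the binary representations of $k=0,\dots,M-1$. We then form $V=\sum_k|\phi_k\rangle\langle\tilde\phi_k|$ as in Eq.~(21). Lemma~2 gives $V^\dagger V=I_{\tilde{\mathcal H}_{\mathrm{eff}}}$ and $VV^\dagger=\Pi_{\mathrm{eff}}$. So $V^\dagger$ restricted to $\mathcal H_{\mathrm{eff}}$ is the required lossless encoding, and $V$ inverts it on its image. By Theorem~2 and Corollary~1, the QAOA dynamics and measurement statistics are reproduced exactly in this $m$-qubit space, which shows that $\lceil\log_2 M\rceil$ qubits suffice and, combined with necessity, that this number is minimal.
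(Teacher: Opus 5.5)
Your proposal is correct and follows essentially the same route as the paper: injectivity of the isometric encoding forces $2^m \ge M$, and the integrality of $m$ turns $m \ge \log_2 M$ into $m \ge \lceil \log_2 M \rceil$. You also spell out two things the paper's proof leaves implicit in the construction of Section~III-C: the reverse direction of the equivalence, and sufficiency via $V=\sum_k |\phi_k\rangle\langle\tilde\phi_k|$ together with Lemma~2.
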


\begin{proof}
    A mapping $V$ that preserves the state (an isometry) must be injective, implying that it cannot map two different states to the same point. Mapping an $M$-dimensional space into an $m$-qubit space requires the target space to have at least $M$ dimensions to prevent information loss:%
    \begin{equation}
        2^m = \dim[(\mathbb{C}^2)^{\otimes m}] \ge M.
    \end{equation}

    \noindent Taking the base-2 logarithm of both sides results in $m \ge \log_2 M$. Since the number of qubits must be an integer, the requirement becomes $m \ge \lceil\log_2 M\rceil$.
\end{proof}

\begin{corollary}
    If the QAOA dynamics are confined to a lower-dimensional subspace $\mathcal{H}_{\mathrm{eff}}$ with $M = \dim(\mathcal{H}_{\mathrm{eff}}) < 2^n$, then, via an isometric mapping, the space $\tilde{\mathcal H}_{\mathrm{eff}}$ can be represented using $m$ qubits, where $m$ satisfies:%
    \begin{equation}
        m = \lceil\log_2 M\rceil \le n.
    \end{equation}

\end{corollary}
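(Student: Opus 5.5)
The plan is to combine Theorem~3 with the elementary fact that $\mathcal{H}_{\mathrm{eff}}$ is a subspace of $\mathcal{H}=(\mathbb{C}^2)^{\otimes n}$. First, I will fix the isometry $V$ constructed in Section~III-C from an orthonormal basis $\{|\phi_k\rangle\}_{k=0}^{M-1}$ of $\mathcal{H}_{\mathrm{eff}}$. Next, I take $m=\lceil\log_2 M\rceil$, so that $2^m\ge M$ and the reduced space $(\mathbb{C}^2)^{\otimes m}$ contains at least $M$ orthonormal computational basis states $|\tilde\phi_k\rangle$. By Lemma~2, $V^\dagger V=I_{\tilde{\mathcal H}_{\mathrm{eff}}}$ and $VV^\dagger=\Pi_{\mathrm{eff}}$, so $V$ is a bijective isometry between $\tilde{\mathcal H}_{\mathrm{eff}}$ and $\mathcal{H}_{\mathrm{eff}}$. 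The sufficiency part of Theorem~3 then says these $m$ qubits represent the system without information loss. Corollary~1 guarantees that the QAOA dynamics carried over to this space are preserved exactly.

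It remains to prove the inequality $m\le n$. Because $\mathcal{H}_{\mathrm{eff}}\subseteq\mathcal{H}$, we have $M\le 2^n$, and the hypothesis strengthens this to $M<2^n$. The function $\log_2$ is monotone, so $\log_2 M<n$. Since $n$ is an integer, the ceiling satisfies $\lceil\log_2 M\rceil\le n$. I would also note when the inequality is strict: $m<n$ holds exactly when $M\le 2^{n-1}$, since $\lceil\log_2 M\rceil\le n-1$ if and only if $M\le 2^{n-1}$. Otherwise, when $2^{n-1}<M<2^n$, we get $m=n$ and no qubit saving occurs.

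I do not expect any step to be a genuine obstacle. The only point that needs care is the statement's phrasing. The displayed claim is $m\le n$ rather than $m<n$. Although the hypothesis is $M<2^n$, this guarantees only that no extra qubits are needed; an actual reduction requires $M\le 2^{n-1}$. I would state this distinction explicitly so the corollary is not read as promising a strict saving.
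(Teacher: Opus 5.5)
Your proposal is correct and follows essentially the same route as the paper, whose proof is your second paragraph: $M<2^n$ gives $\log_2 M<n$, and since $n$ is an integer, $\lceil\log_2 M\rceil\le n$. The paper takes representability with $m=\lceil\log_2 M\rceil$ qubits as given from Theorem~3 and the construction of $V$, and your observation that an actual saving $m<n$ occurs exactly when $M\le 2^{n-1}$ is a correct refinement that the paper does not state.
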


\begin{proof}
    By definition, the number of qubits required to represent a space of dimension $M$ is $m = \lceil \log_2 M \rceil$. Since $M < 2^n$, we have $\log_2 M < n$. Taking the ceiling on both sides gives $m = \lceil \log_2 M \rceil \le n$. 
\end{proof}

To evaluate the practical impact of EQE-QAOA, we consider the lower bound on qubit requirements for problems exhibiting high degrees of symmetry. For instance, in problems exhibiting complete permutation symmetry (e.g., for Max-Cut on all-to-all connected graphs where $M = n+1$), the number of qubits $m$ scale logarithmically with the original problem size $n$:%
\begin{equation}
    m_{\min} = \lceil \log_2 (n+1) \rceil.
\end{equation}

\noindent This represents the theoretical maximum compression achievable by the EQE-QAOA. In such cases, the qubit savings $\Delta = n - m_{\min}$ grow linearly with $n$.

The proposed EQE-QAOA has lower computational complexity due to the reduction of qubits. In the original QAOA, $n$ qubits are required to represent the quantum state, with a state space of dimension $2^n$. As a result, each layer has a computational complexity of $O(2^n)$. By contrast, in the proposed EQE-QAOA, only $m$ qubits are required to represent the quantum state, with a state space of dimension $2^m$. As a result, each QAOA layer has a computational complexity of $O(2^m)$. In the most extreme cases, only $m_{\min}$ qubits are required as shown in Eq.~(33), with a state space of dimension $n+1$. As a result, the compression ratio is $\eta = 2^n / (n+1)$, which grows exponentially with the problem size $n$. In this case, EQE-QAOA reduces the number of required qubits from $n$ to $m_{\min}$, while reducing the computational complexity from $O(2^n)$ to $O(n)$.

\section{Numerical Results}

\subsection{Experimental Settings}

All experiments are conducted using the Qiskit Aer statevector simulator on a classical computer equipped with an Intel Core i7-13700K CPU and 16 GB RAM. Due to the exponential scaling of the Hilbert space, classical simulation is primarily limited by memory resources. We consider the Max-Cut problem, which divides the vertices of a graph into two sets to maximize the number of edges between them. We evaluate both the original QAOA on $n$ qubits and the proposed EQE-QAOA on $m$ qubits. The number of QAOA layers is fixed to $P=2$ for all experiments. For each problem instance, we perform five independent optimization processes with different random initializations of the parameters $(\boldsymbol{\gamma}, \boldsymbol{\beta})$, and report the best result among them. Within each optimization process, the original QAOA and EQE-QAOA use identical parameters. Depending on the problem setting, the mixer $H_M$ is chosen either as the standard $X$-mixer or a constraint-preserving mixer.

We evaluate EQE-QAOA on two groups of problems. (i) The first group is designed to evaluate how problem symmetry affects the efficiency of EQE-QAOA. It consists of general graph families, including cycle graphs, complete graphs $K_n$ (full permutation symmetry, leading to the strongest reduction), and Erdős–Rényi random graphs (with edge probability $p_{\text{edge}} = 0.5$, where EQE-QAOA cannot effectively reduce the number of qubits). The problem sizes are chosen as $n \in \{6,8,10,12\}$. For each pair $(\text{family}, n)$, the graph structure is fixed, while parameter trials are generated using independent random seeds. (ii) The second group is designed to evaluate how constraints affect the efficiency of EQE-QAOA.
It considers constrained Max-Cut problems with a fixed-cardinality constraint on Erdős–Rényi random graphs. In this setting, we fix $n=12$ and impose a Hamming-weight constraint $\sum_{i=1}^{n} x_i = k$, where $k \in \{1,2,3,4,6\}$, which restricts feasible solutions to cuts with exactly $k$ vertices in one partition. Furthermore, an XY-type mixer is chosen as $H_M$ to preserve the Hamming-weight constraint. To validate EQE-QAOA, we evaluate three aspects: (i) qubit reduction, (ii) solution equivalence, and (iii) state representation memory usage. We validate the equivalence between the original QAOA and the proposed EQE-QAOA using three complementary metrics: the state fidelity $F(|\psi\rangle,|\tilde{\psi}\rangle)$ (reported as $F-1$), the absolute difference in the Max-Cut expectation value $|\Delta E| = |\langle H_C\rangle - \langle \tilde{H}_C\rangle|$, and the total variation distance (TVD) between the measurement probability distributions.

In addition to comparing with the original QAOA, we further compare the proposed EQE-QAOA with several existing qubit efficient schemes as benchmarks, including the minimal encoding scheme of Tan \textit{et al.}~\cite{tan2021qubit} (compact encoding), the partition-based method (DC-QAOA) of Li \textit{et al.}~\cite{li2022large} (problem partitioning), and the qubit-reuse compilation strategy of DeCross \textit{et al.}~\cite{decross2023qubit} (circuit-level compression). We consider a graph with $n$ vertices, a fixed-cardinality constraint $\sum_{i=1}^{n} x_i = k$ with $k = n/2$ is imposed across all methods to ensure a fair comparison of qubit efficiency and optimization performance. We evaluate (i) qubit requirement, defined as the number of qubits used; (ii) optimization performance, measured by the Max-Cut expectation gap $|\Delta E|$; and (iii) approximate ratio, defined as the ratio between the obtained solution value and the optimal value.

\subsection{Evaluation of EQE-QAOA}

\subsubsection{Qubit Reduction}

\begin{figure}[t]
  \centering
  \includegraphics[width=0.9\linewidth]{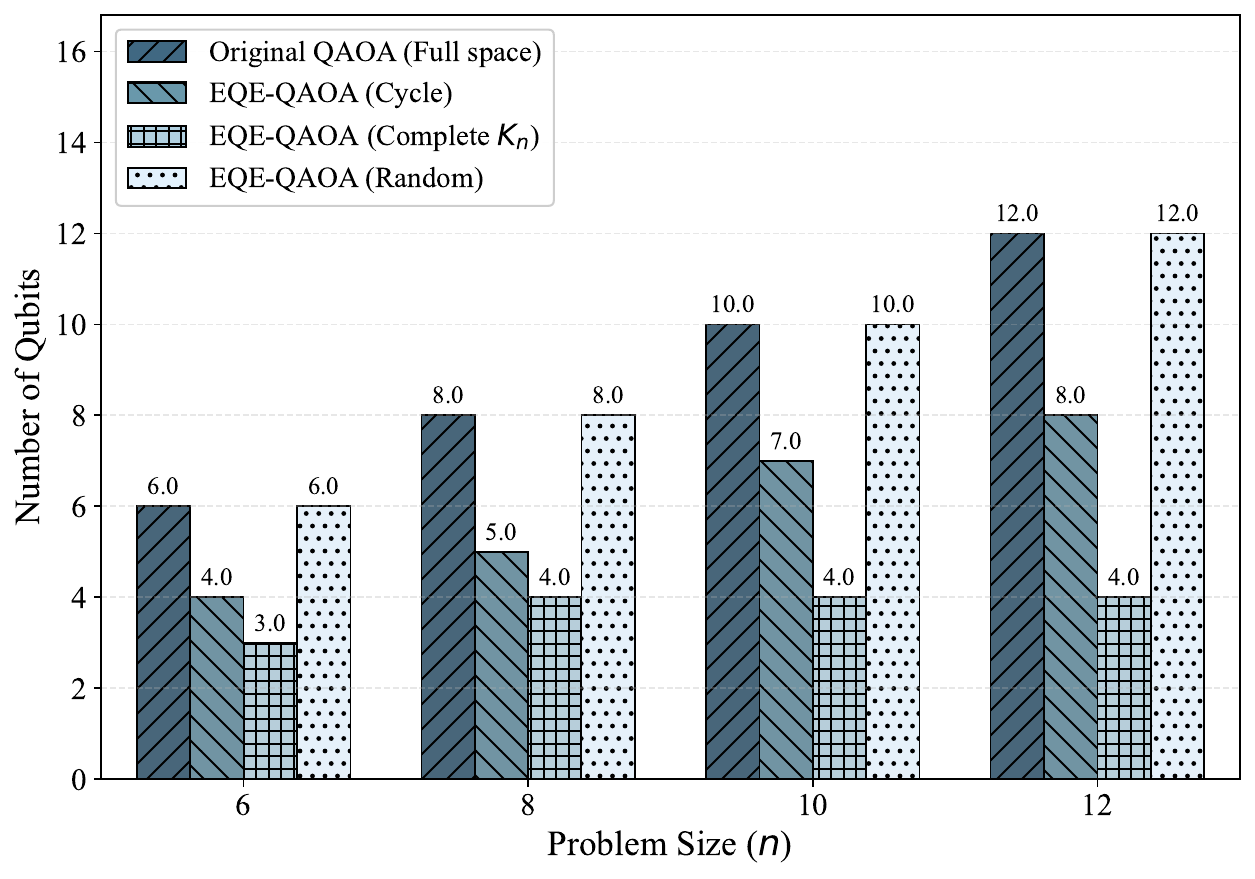}
  \caption{Qubit reduction achieved by EQE-QAOA across different graph families.}
  \label{fig_1}
\end{figure}

Fig.~\ref{fig_1} compares the original QAOA qubit requirement $n$ with the reduced requirement $m$ for cycle graphs, complete graphs $K_n$, and Erdős–Rényi random graphs as the problem size increases from $n=6$ to $n=12$. As shown in Fig.~\ref{fig_1}, the original QAOA scales linearly with $n$. By contrast, EQE-QAOA reduces the qubit requirement from $n$ to as few as $3$--$4$ qubits for graph families, resulting in up to a $70$\% reduction. In particular, complete graphs exhibit the strongest reduction, with the required qubit number remaining nearly constant across different problem sizes, reflecting their high degree of symmetry. Cycle graphs show moderate qubit reduction, indicating partial structural symmetry. In comparison, random graphs exhibit almost no reduction, with the required qubit number remaining close to $n$. This is indeed expected, as random graphs typically lack symmetry, leading to a Hamiltonian with an almost irreducible representation structure. Consequently, the effective invariant subspace dimension approaches that of the full Hilbert space, limiting the benefit of qubit reduction. Overall, these results show that problems exhibiting higher symmetry lead to greater qubit savings in EQE-QAOA, while less symmetric structures provide limited or no reduction.

\begin{figure}[t]
  \centering
  \includegraphics[width=0.9\linewidth]{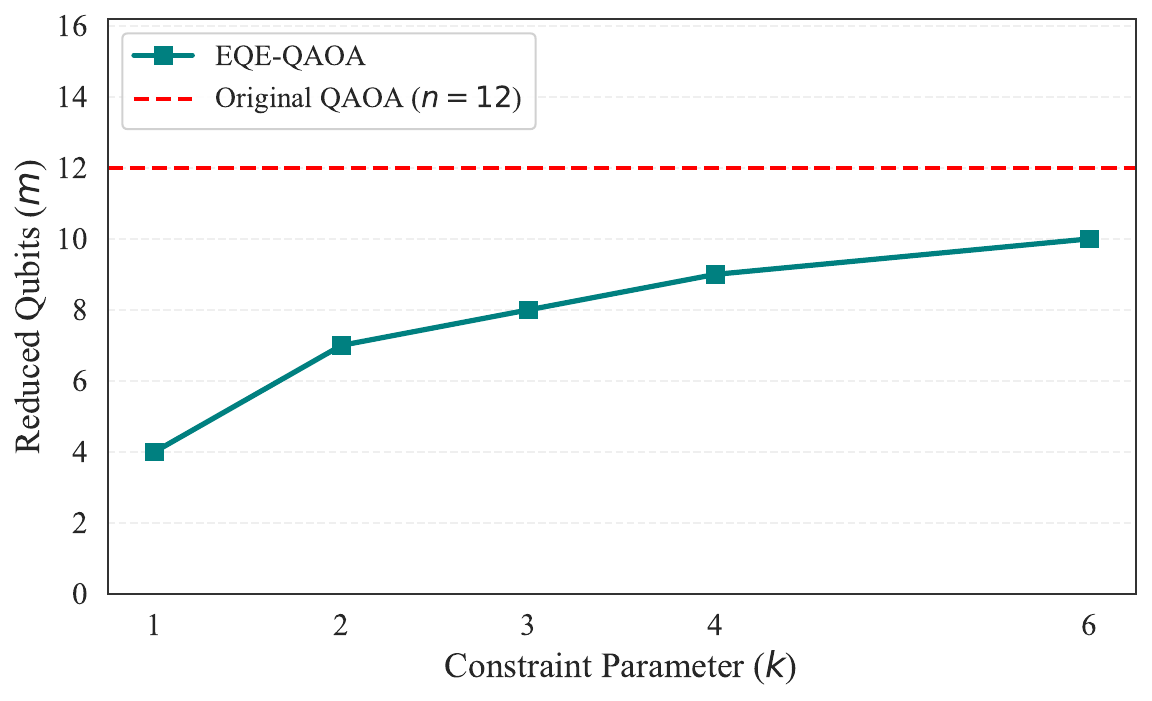}
  \caption{Effect of the Hamming-weight constraint $k$ on the reduced qubit requirement ($n=12$).}
  \label{fig_2}
\end{figure}

Fig.~\ref{fig_2} illustrates how the Hamming-weight constraint $\sum_{i=1}^{n} z_i = k$ affects the reduced qubit number $m$ for $n=12$. As shown in Fig.~\ref{fig_2}, 
$m$ increases with $k$ and tends towards the original requirement of 12 qubits. This is because smaller values of 
$k$ impose tighter constraints, which limit the number of feasible solutions and reduce the size of the feasible subspace. As a result, the effective dimension is smaller and fewer qubits are needed. As $k$ increases, the constraint becomes less restrictive, allowing more feasible solutions. This expands the feasible subspace, and therefore more qubits are required to represent it.

\subsubsection{Equivalence}

\begin{figure}[t]
\centering

\subfloat[State fidelity offset $F-1$.]{
    \includegraphics[width=0.9\linewidth]{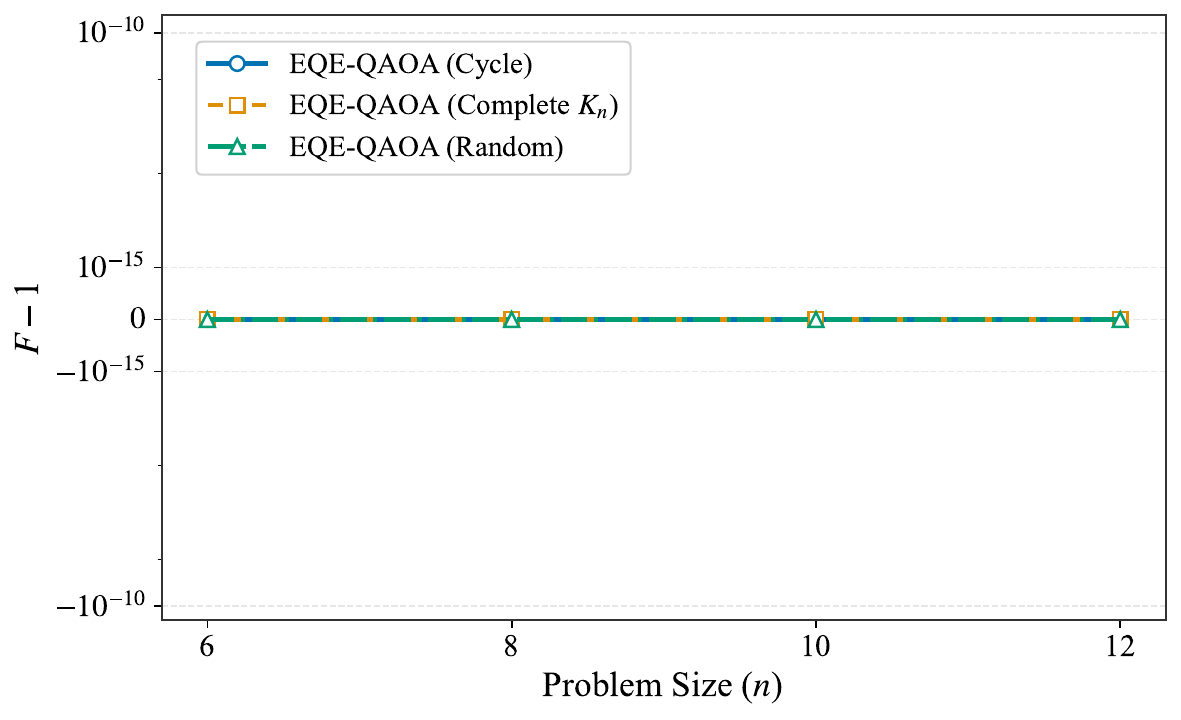}
}\\[-5pt]

\vspace{6pt}

\subfloat[Observable difference $|\Delta E|$.]{
    \includegraphics[width=0.9\linewidth]{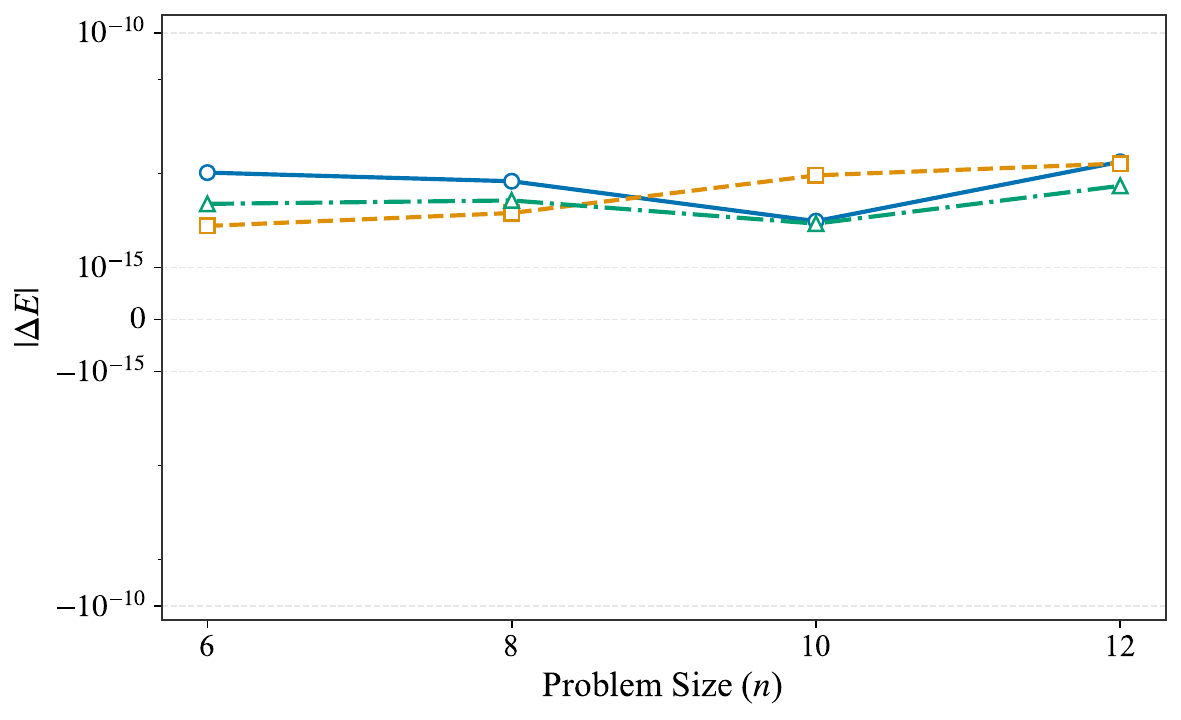}
}\\[-5pt]

\vspace{6pt}

\subfloat[Total variation distance (TVD).]{
    \includegraphics[width=0.9\linewidth]{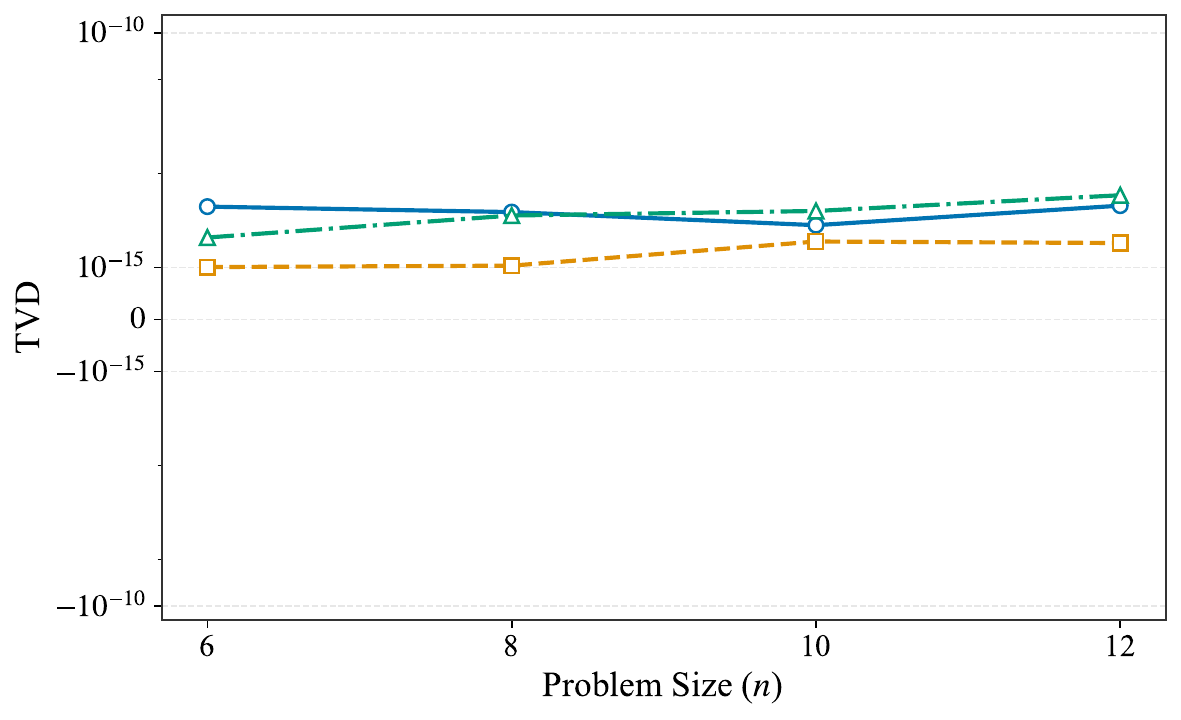}
}

\caption{Equivalence validation across general graph families.}
\label{fig:general_equivalence}
\end{figure}

Fig.~\ref{fig:general_equivalence} presents three equivalence metrics, namely state fidelity offset $(F-1)$, observable difference $|\Delta E|$, and TVD, for cycle, complete graphs $K_n$, and Erdős–Rényi random graphs as the problem size increases. The fidelity offset remains zero across all instances, indicating that EQE-QAOA and the original QAOA are indistinguishable within machine precision. Both the observable deviation and the probability TVD remain at the $10^{-14} \sim 10^{-15}$ level, with no systematic growth as $n$ increases. These results confirm that the reduced formulation preserves not only the quantum state evolution but also the expectation values and measurement statistics across different structural graph families.

\begin{figure}[t]
\centering

\subfloat[State fidelity offset $F-1$.]{
    \includegraphics[width=0.9\linewidth]{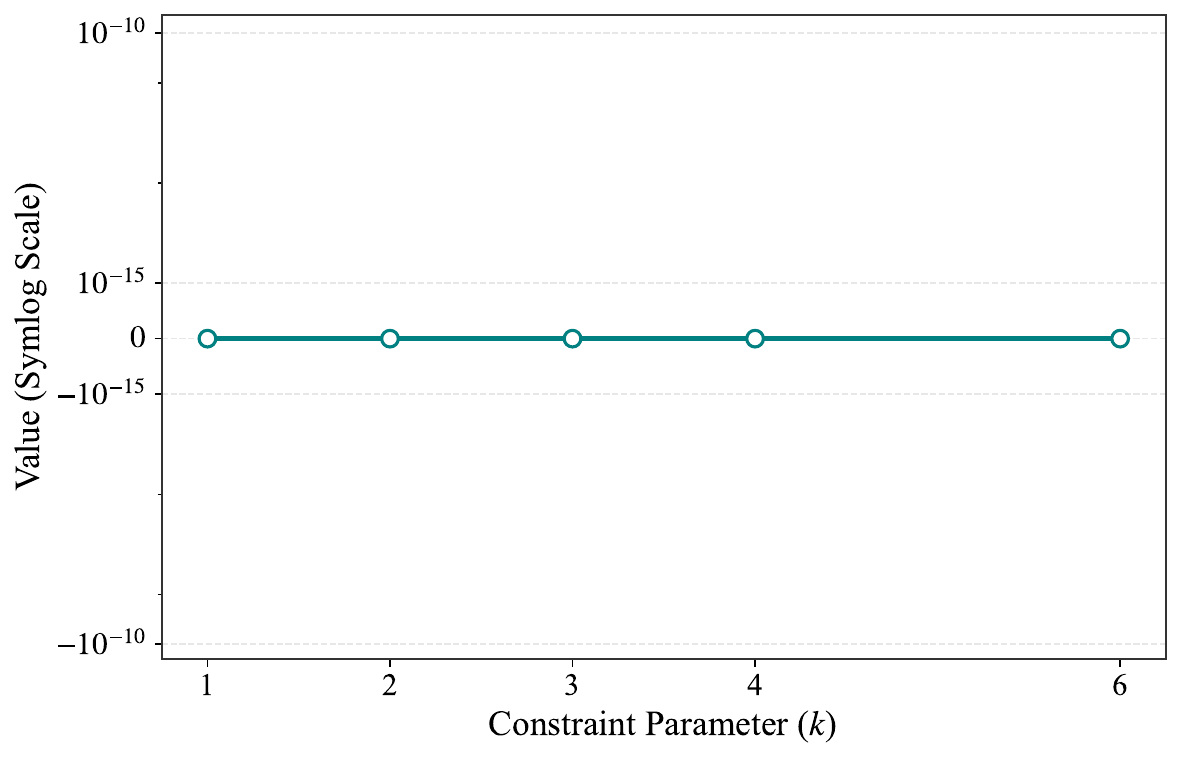}
}\\[-5pt]

\vspace{6pt}

\subfloat[Observable difference $|\Delta E|$.]{
    \includegraphics[width=0.9\linewidth]{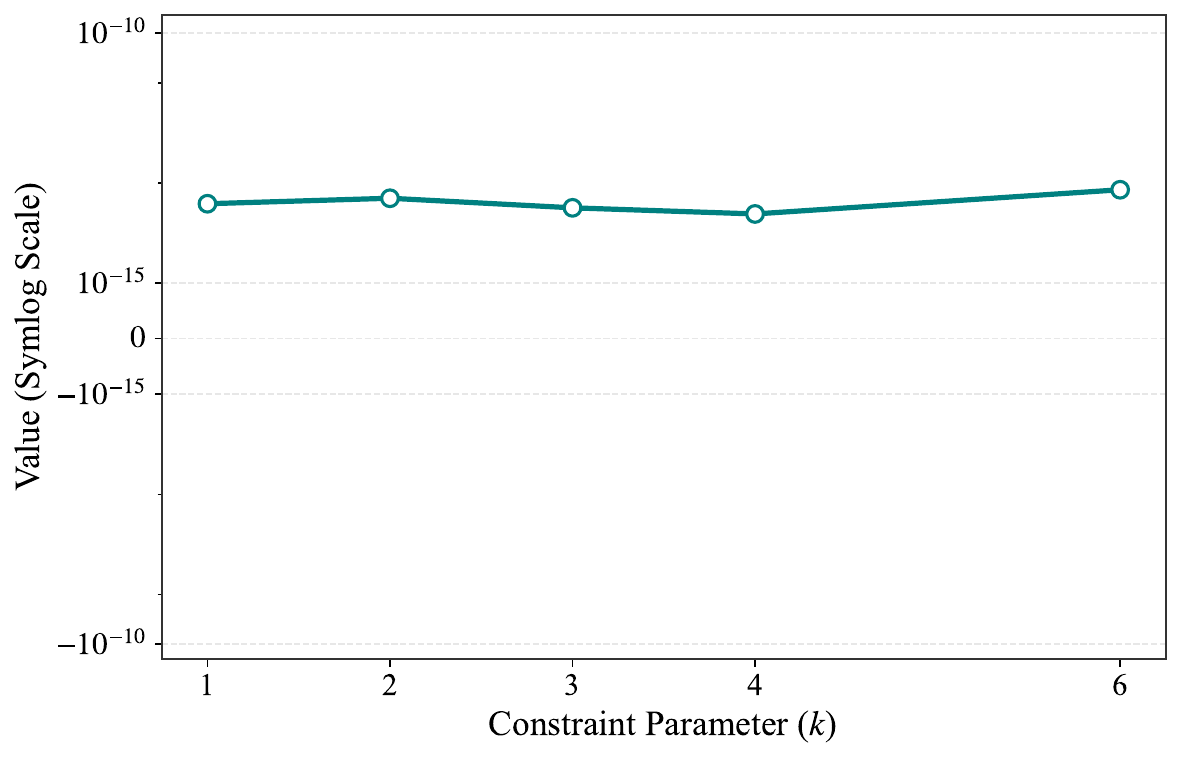}
}\\[-5pt]

\vspace{6pt}

\subfloat[Total variation distance (TVD).]{
    \includegraphics[width=0.9\linewidth]{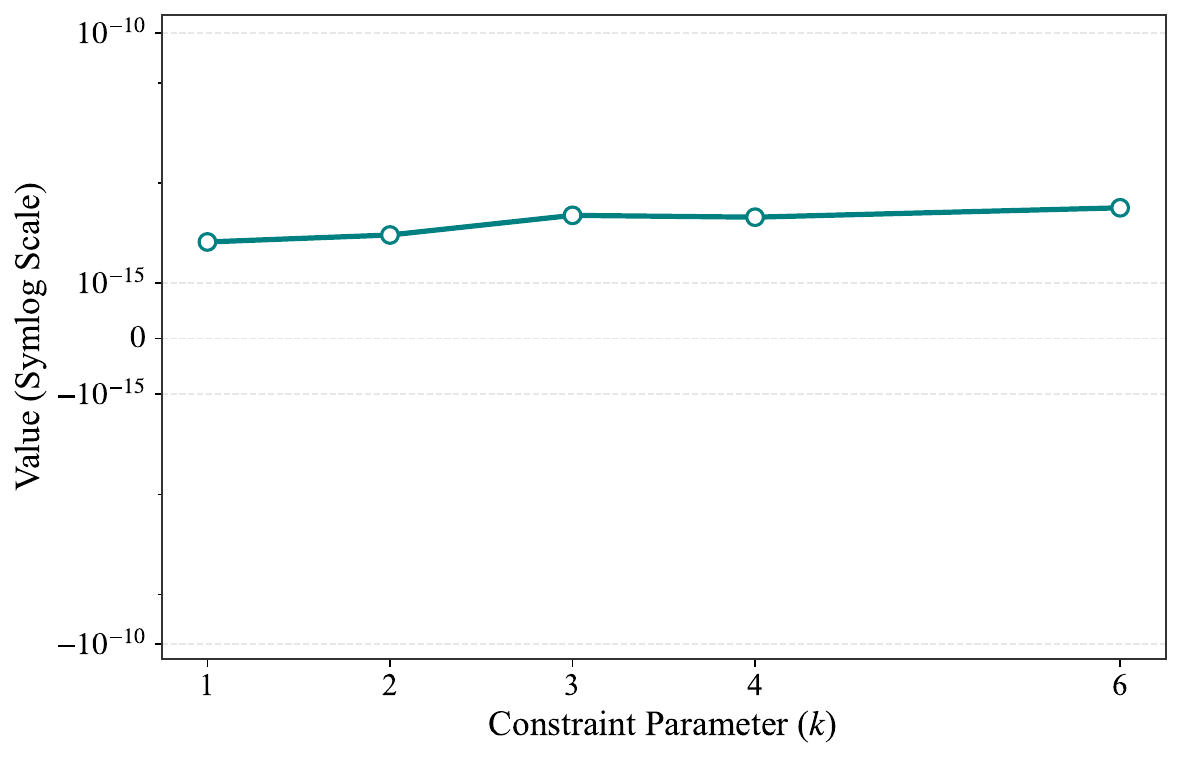}
}

\caption{Equivalence validation under the Random-$k$ constraint setting.}
\label{fig:randomk_equivalence}
\end{figure}

Fig.~\ref{fig:randomk_equivalence} presents the same three metrics for the constrained Max-Cut setting with fixed $n=12$ and varying Hamming weight $k$. Despite changes in the effective subspace dimension induced by different constraints, the fidelity offset remains zero and both $|\Delta E|$ and TVD stay at numerical precision across all tested $k$. This demonstrates that the reduction procedure remains exact under constrained feasible subspaces, preserving the dynamics and measurement behavior regardless of the Hamming-weight parameter.

\subsubsection{Memory}

\begin{figure}[t]
  \centering
  \includegraphics[width=0.9\linewidth]{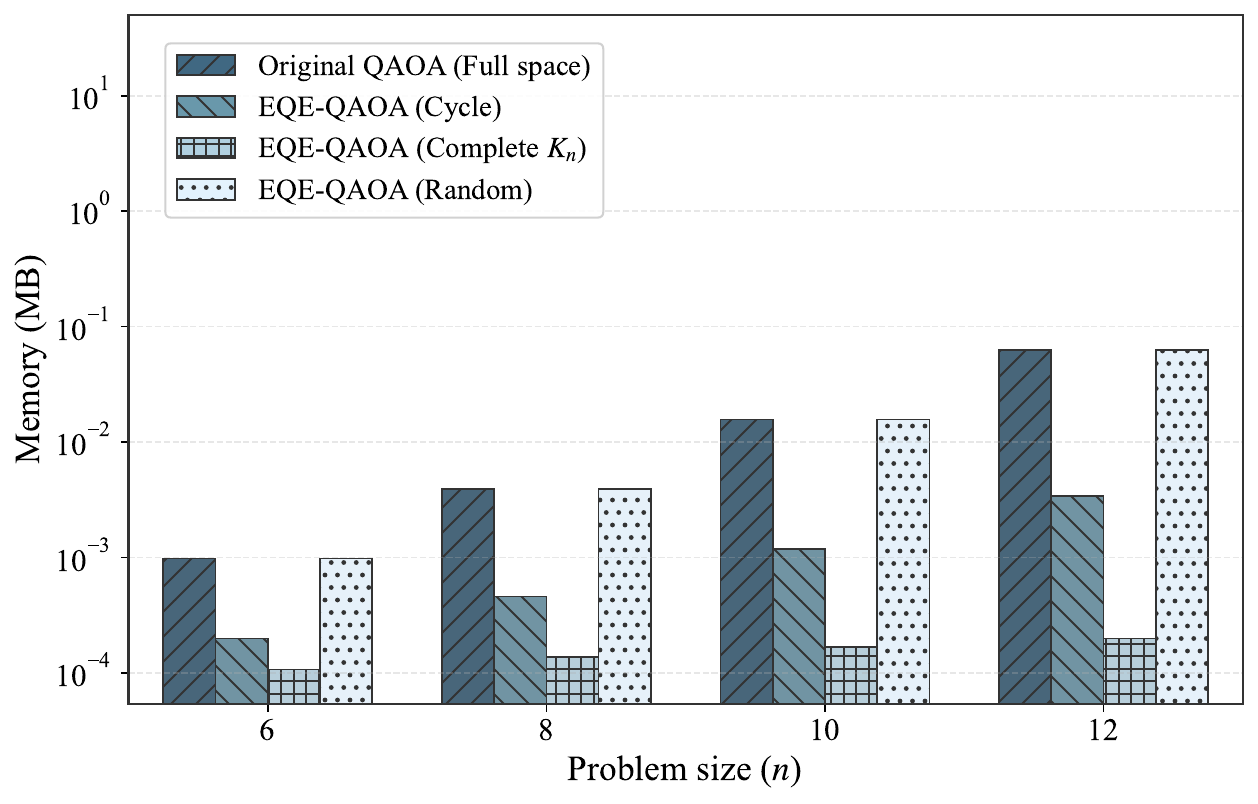}
  \caption{Comparison of state representation memory (Hilbert space dimension) between the original and reduced simulations.}
  \label{fig_5}
\end{figure}

Fig.~\ref{fig_5} compares the state representation memory of the original QAOA and EQE-QAOA on a logarithmic scale. As shown in Fig.~\ref{fig_5}, the memory requirement of the original QAOA increases exponentially with $n$, while EQE-QAOA grows much more slowly. The gap between the two becomes larger as $n$ increases, especially for graph families such as cycle and complete graphs. By contrast, for random graphs, the memory requirement of EQE-QAOA gradually approaches that of the original QAOA. This is because the original QAOA operates on the full Hilbert space of dimension $2^n$, leading to exponential memory growth. By contrast, EQE-QAOA operates on a reduced effective subspace with dimension $2^m$, which is much smaller for structured graphs due to their stronger symmetry. For random graphs, the lack of symmetry limits the reduction of the effective subspace, so the memory requirement becomes close to that of the original QAOA.

\subsection{Comparison with Existing Approaches}

\begin{figure}[t]
\centering

\subfloat[Qubit requirement vs $n$\label{fig:benchmark_qubit}]{
    \includegraphics[width=0.9\linewidth]{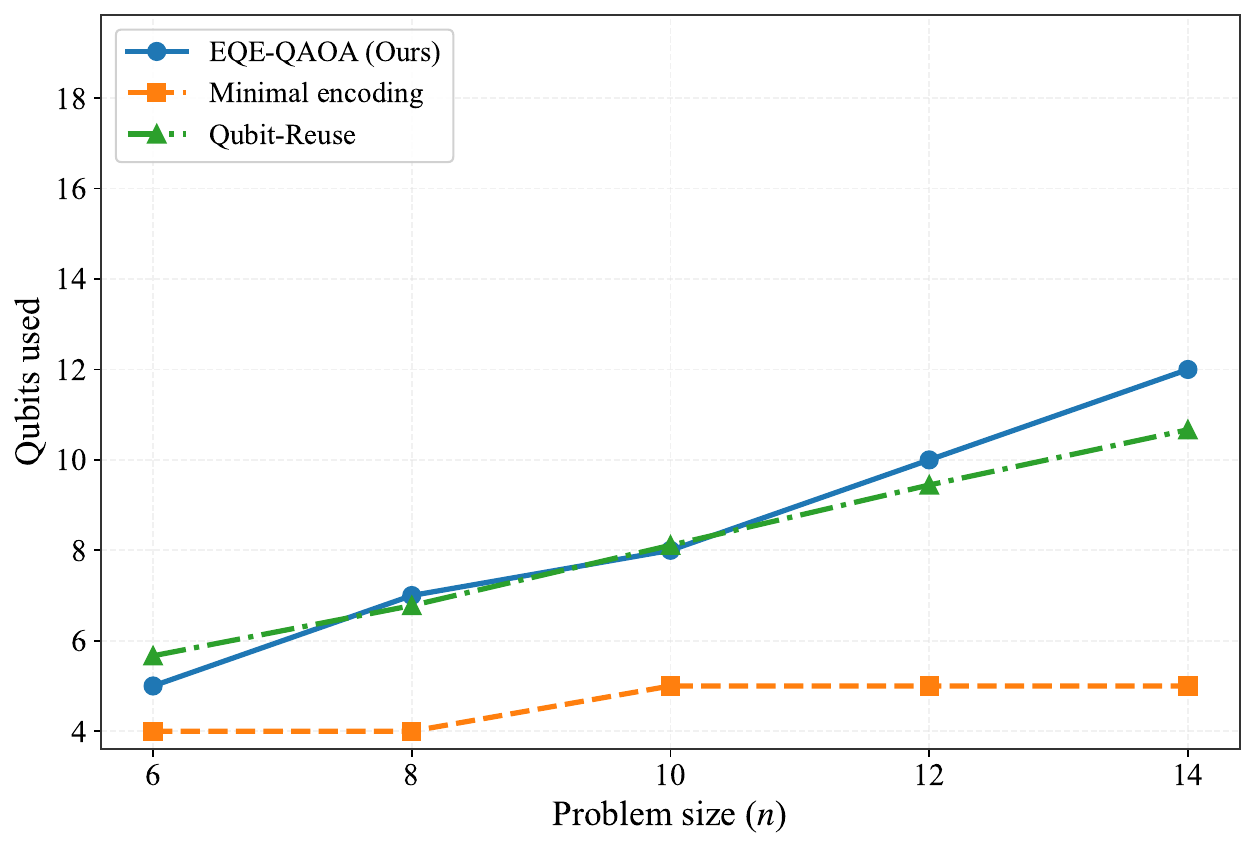}
}\\[-5pt]

\vspace{6pt}

\subfloat[Accuracy gap $|\Delta E|$ vs $n$\label{fig:benchmark_accuracy}]{
    \includegraphics[width=0.9\linewidth]{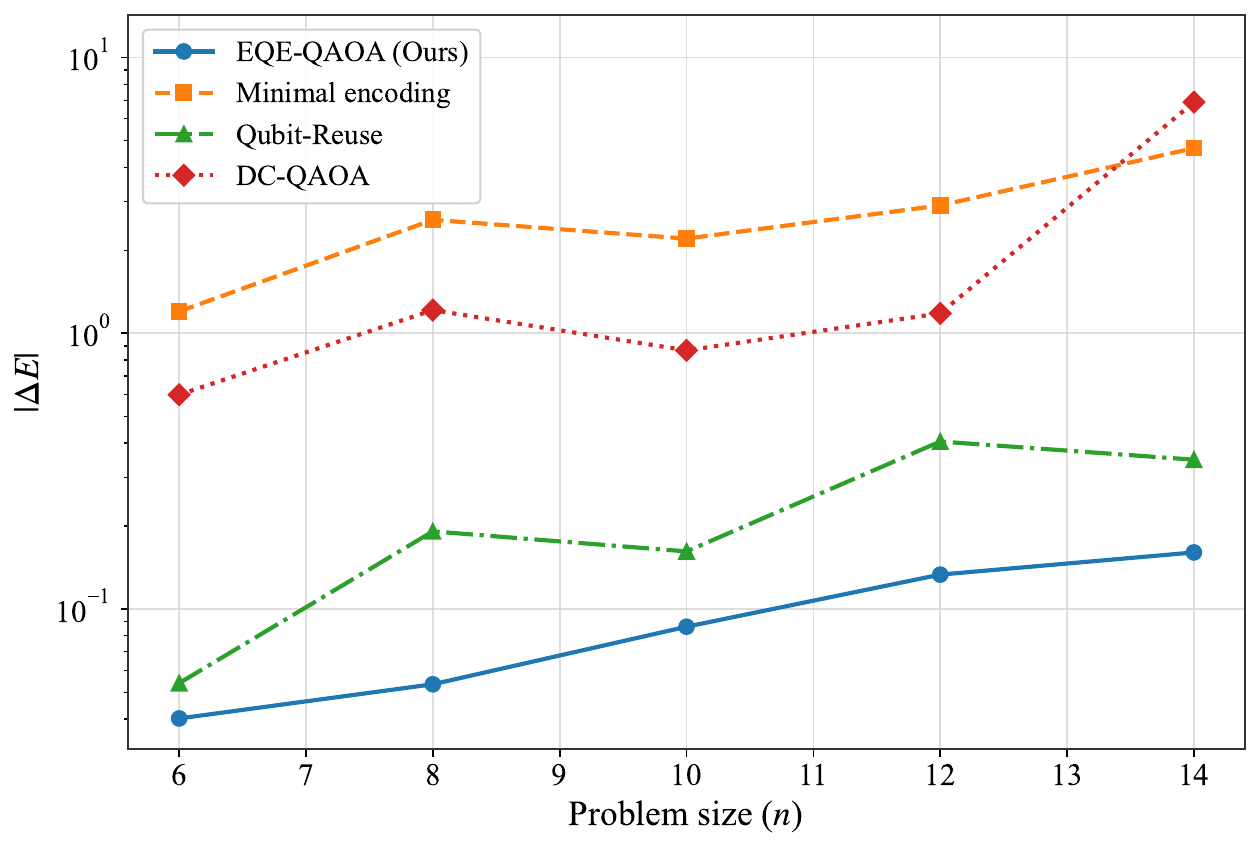}
}\\[-5pt]

\vspace{6pt}

\subfloat[Runtime vs $n$\label{fig:benchmark_runtime}]{
    \includegraphics[width=0.9\linewidth]{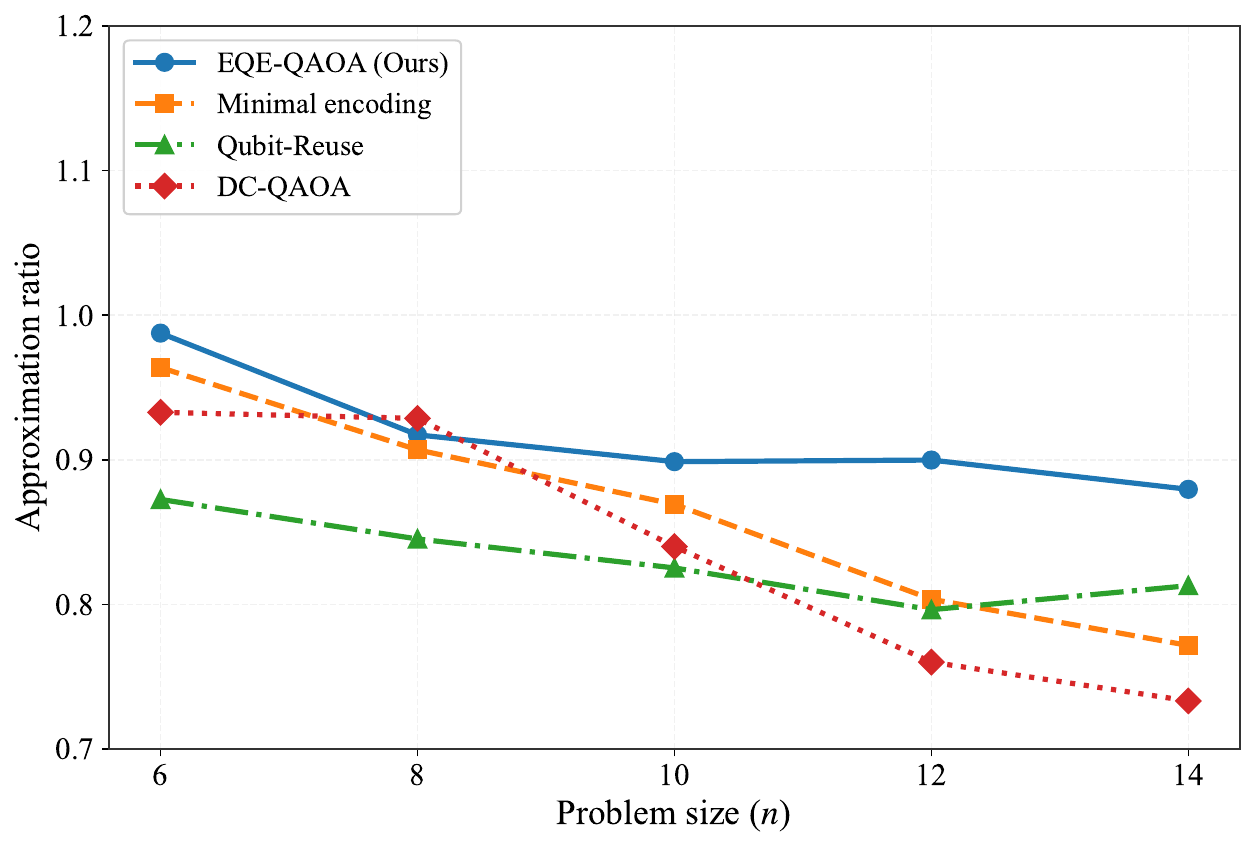}
}

\caption{Benchmark comparison of qubit-efficient methods.}
\label{fig:benchmark}
\end{figure}

To evaluate the proposed method, we compare EQE-QAOA with several representative qubit-efficient QAOA approaches. These methods reduce quantum resource requirements through different mechanisms: minimal encoding~\cite{tan2021qubit} compresses the representation of classical variables, DC-QAOA~\cite{li2022large} partitions the problem into smaller subproblems, and qubit-reuse compilation~\cite{decross2023qubit} reuses physical qubits during circuit execution. By contrast, EQE-QAOA exploits the invariant-subspace structure of the QAOA dynamics to construct a reduced space that preserves the original Hamiltonian evolution. The comparative results are summarized in Fig.~\ref{fig:benchmark}.

As shown in Fig.~\ref{fig:benchmark_qubit}, the minimal encoding scheme achieves the smallest qubit number under its most aggressive setting, where all variables are treated as independent. Despite this extreme compression, EQE-QAOA still attains a competitive level of qubit reduction compared to existing approaches. The qubit requirement of DC-QAOA is not included, because the number of qubits is typically predetermined by the available qubit resources, rather than being uniquely determined by the problem size. As shown in Fig.~\ref{fig:benchmark_accuracy}, EQE-QAOA consistently achieves the lowest accuracy gap across all problem sizes. The qubit-reuse method shows slightly larger gaps but remains close to EQE-QAOA, while the minimal encoding and DC-QAOA exhibit significantly larger gaps that increase with the problem size. Similarly, Fig.~\ref{fig:benchmark_runtime} shows that EQE-QAOA achieves the highest approximation ratio across all problem sizes. By contrast, other methods exhibit performance degradation as the problem size increases, and this gap becomes more significant for larger-scale instances.

These results show that EQE-QAOA provides the best overall performance by balancing qubit reduction with high accuracy. Compared with the other three methods, which sacrifice precision for qubit reduction, EQE-QAOA preserves the original QAOA evolution by exploiting the invariant subspace. This allows our method to achieve the smallest accuracy gap and the highest solution quality across all problem sizes.

\section{Conclusion}

The EQE-QAOA framework was proposed for qubit-efficient combinatorial optimization that preserves the original QAOA dynamics. It addressed the qubit bottleneck of QAOA by reducing the number of required qubits without affecting the optimization performance. Based on our analysis, the proposed EQE-QAOA is broadly applicable to large-scale constrained optimization problems in engineering. It reduced the required qubit resources, while also lowering computational complexity and improving scalability under limited qubit resources. Future work will focus on integrating qubit-efficient representations with the design of problem-specific mixer Hamiltonians, to further reduce computational cost and improve efficiency.


%





\ifCLASSOPTIONcaptionsoff
  \newpage
\fi


\appendices

\section{Proof of Lemma 1}

\textbf{Lemma 1.} Let $\Pi$ be an orthogonal projector acting on $\mathcal{H}$. If $[\Pi, H_{C,M}] = 0$, then $[\Pi, U(\boldsymbol{\gamma}, \boldsymbol{\beta})] = 0$ for any depth $p$.

\begin{proof}
    We first consider the commutation of $\Pi$ with the operator $e^{-i\gamma H_C}$. By definition of the matrix exponential, we have:%
    \begin{equation}
        e^{-i\gamma H_C} = \sum_{k=0}^{\infty} \frac{(-i\gamma)^k}{k!} H_C^k.
    \end{equation}

    \noindent Given $[\Pi, H_C] = 0$, it follows by induction that $\Pi H_C^k = H_C^k \Pi$ for all $k \in \mathbb{N}_0$. Specifically, for $k=0$, $\Pi I = I \Pi$ is trivial. Assuming the relation holds for $k-1$, we have:%
    \begin{equation}
        \Pi H_C^k = (\Pi H_C^{k-1}) H_C = (H_C^{k-1} \Pi) H_C = H_C^{k-1} (H_C \Pi) = H_C^k \Pi.
    \end{equation}

    \noindent By the linearity of the commutator, we obtain:%
    \begin{equation}
        [\Pi, e^{-i\gamma H_C}] = \sum_{k=0}^{\infty} \frac{(-i\gamma)^k}{k!} [\Pi, H_C^k] = 0.
    \end{equation}

    \noindent An identical argument yields $[\Pi, e^{-i\beta H_M}] = 0$. Next, consider a single QAOA layer $U_l(\gamma_l, \beta_l) = e^{-i\beta_l H_M} e^{-i\gamma_l H_C}$. Since $\Pi$ commutes with both exponential factors, we have:%
    \begin{equation}
        \begin{aligned}
        \Pi U_l 
        &= \Pi e^{-i\beta_l H_M} e^{-i\gamma_l H_C} \\
        &= e^{-i\beta_l H_M} \Pi e^{-i\gamma_l H_C} 
         = e^{-i\beta_l H_M} e^{-i\gamma_l H_C} \Pi 
         = U_l \Pi ,
        \end{aligned}
    \end{equation}

    \noindent hence $[\Pi, U_l] = 0$ for each layer $l$. Finally, the full QAOA unitary is the ordered product $U(\boldsymbol{\gamma}, \boldsymbol{\beta}) = \prod_{l=1}^p U_l(\gamma_l, \beta_l)$. Applying the commutation property for each layer iteratively, we get:%
    \begin{equation}
    \begin{aligned}
    \Pi U(\boldsymbol{\gamma}, \boldsymbol{\beta}) 
    &= \Pi (U_p U_{p-1} \cdots U_1) \\
    &= U_p \Pi (U_{p-1} \cdots U_1) = \cdots \\
    &= (U_p U_{p-1} \cdots U_1)\Pi 
     = U(\boldsymbol{\gamma}, \boldsymbol{\beta}) \Pi .
    \end{aligned}
    \end{equation}

    \noindent Therefore, $[\Pi, U(\boldsymbol{\gamma}, \boldsymbol{\beta})] = 0$.

\end{proof}

\section{Proof of Theorem 1}

\textbf{Theorem 1.} Assume the initial state $|\psi_0\rangle$ is contained within the effective subspace, i.e., $\Pi_{\mathrm{eff}}|\psi_0\rangle = |\psi_0\rangle$. Let $|\psi\rangle = U|\psi_0\rangle \in \mathcal{H}$ be the full-space evolved state and $|\psi_{\mathrm{eff}}\rangle = U_{\mathrm{eff}}|\psi_0\rangle \in \mathcal{H}_{\mathrm{eff}}$ be the reduced-space evolved state. Then:%
    \begin{enumerate}
        \item \textbf{State equivalence:} The state trajectory is confined to the effective subspace, $|\psi\rangle \in \mathcal{H}_{\mathrm{eff}}$, and $|\psi\rangle = |\psi_{\mathrm{eff}}\rangle$.

        \item \textbf{Observable equivalence:} For any observable $O$ that commutes with the subspace projector ($[O, \Pi_{\mathrm{eff}}] = 0$), the expectation values of $O$ satisfy:%
            \begin{equation}
                \langle\psi|O|\psi\rangle = \langle\psi_{\mathrm{eff}}| O|_{\mathcal{H}_{\mathrm{eff}}} |\psi_{\mathrm{eff}}\rangle.
            \end{equation}
            
        \item \textbf{Measurement equivalence:} For any solution $x \in \{0,1\}^n$, the probability distributions are identical:%
        \begin{equation}
            \Pr_{\mathrm{full}}(x) = \Pr_{\mathrm{eff}}(x).
        \end{equation}

        \noindent Specifically, if the basis state $|x\rangle$ is orthogonal to $\mathcal{H}_{\mathrm{eff}}$, then we have $\Pr_{\mathrm{full}}(x) = 0$.
    \end{enumerate}

    \begin{proof}
    \begin{enumerate}
        \item \textbf{State Equivalence:} The containment of the evolved state within $\mathcal{H}_{\mathrm{eff}}$ follows from the interplay between the initial condition and the symmetry of the unitary. Since $[\Pi_{\mathrm{eff}}, H_{C,M}] = 0$, by Lemma 1, the total unitary $U$ commutes with the projector: $[\Pi_{\mathrm{eff}}, U] = 0$. Given that $\Pi_{\mathrm{eff}}$ is an orthogonal projector, the initial state satisfies $|\psi_0\rangle = \Pi_{\mathrm{eff}}|\psi_0\rangle$. Thus, for the evolved state:%
        \begin{equation}
            |\psi\rangle = U\Pi_{\mathrm{eff}}|\psi_0\rangle = \Pi_{\mathrm{eff}}U|\psi_0\rangle.
        \end{equation}

        \noindent By the definition of a projector, this identity $\Pi_{\mathrm{eff}}|\psi\rangle = |\psi\rangle$ is a necessary and sufficient condition for $|\psi\rangle \in \mathcal{H}_{\mathrm{eff}}$.

        Furthermore, we examine the restriction equivalence. Each layer of the QAOA unitary can be expanded via the Taylor series. For any $|\phi\rangle \in \mathcal{H}_{\mathrm{eff}}$, we have:%
        \begin{equation}
            \begin{aligned}
            e^{-i\gamma H_C} |\phi\rangle
            &= \sum_{k=0}^{\infty} \frac{(-i\gamma)^k}{k!} H_C^k |\phi\rangle \\
            &= \sum_{k=0}^{\infty} \frac{(-i\gamma)^k}{k!} (H_C^{\mathrm{eff}})^k |\phi\rangle
            = e^{-i\gamma H_C^{\mathrm{eff}}} |\phi\rangle.
            \end{aligned}
        \end{equation}

        \noindent where the second equality holds because $H_C^k$ acts identically to $(H_C^{\mathrm{eff}})^k$ on the invariant subspace. By induction over the $P$ layers, the full-space trajectory is mapped directly to the reduced-space trajectory: $U|\psi_0\rangle = U_{\mathrm{eff}}|\psi_0\rangle$.

        \item \textbf{Observable Equivalence:} Consider an observable $O$ that commutes with the subspace projector, $[O, \Pi_{\mathrm{eff}}] = 0$. Since $|\psi\rangle = \Pi_{\mathrm{eff}}|\psi\rangle$, and the idempotency $\Pi_{\mathrm{eff}}^2 = \Pi_{\mathrm{eff}}$, we insert the identity $\Pi_{\mathrm{eff}}$ into the expectation value:%
        \begin{equation}
            \langle\psi| O |\psi\rangle = \langle\psi| \Pi_{\mathrm{eff}} O \Pi_{\mathrm{eff}} |\psi\rangle.
        \end{equation}

        \noindent Since $[O, \Pi_{\mathrm{eff}}] = 0$, we have $\Pi_{\mathrm{eff}} O \Pi_{\mathrm{eff}} = O \Pi_{\mathrm{eff}}^2 = O \Pi_{\mathrm{eff}}$. This operator is the formal extension of the restricted operator $O|_{\mathcal{H}_{\mathrm{eff}}}$ to the full space. Substituting the state identity $|\psi\rangle = |\psi_{\mathrm{eff}}\rangle$, we obtain:%
        \begin{equation}
            \langle\psi_{\mathrm{eff}}| (O \Pi_{\mathrm{eff}}) |\psi_{\mathrm{eff}}\rangle = \langle\psi_{\mathrm{eff}}| O|_{\mathcal{H}_{\mathrm{eff}}} |\psi_{\mathrm{eff}}\rangle,
        \end{equation}

        \noindent where the final step relies on the fact that $|\psi_{\mathrm{eff}}\rangle$ resides solely within the domain where $\Pi_{\mathrm{eff}}$ acts as identity.

        \item \textbf{Measurement Equivalence:} The probability of observing outcome $x$ in the full-space model is $\Pr_{\mathrm{full}}(x) = |\langle x | \psi\rangle|^2$. We decompose the basis state $|x\rangle$ as $|x\rangle = \Pi_{\mathrm{eff}}|x\rangle + (I - \Pi_{\mathrm{eff}})|x\rangle$. The transition amplitude becomes:%
        \begin{equation}
            \langle x | \psi\rangle = \langle x | \Pi_{\mathrm{eff}} |\psi\rangle = \left( \langle x | \Pi_{\mathrm{eff}} \right) |\psi\rangle.
        \end{equation}

        \noindent If $|x\rangle \perp \mathcal{H}_{\mathrm{eff}}$, then $\Pi_{\mathrm{eff}}|x\rangle = 0$, leading to zero probability. For states $|x\rangle$ with support in $\mathcal{H}_{\mathrm{eff}}$, the equivalence $|\psi\rangle = |\psi_{\mathrm{eff}}\rangle$ ensures the amplitudes are identical. Since the mapping from $\mathcal{H}_{\mathrm{eff}}$ to $\mathcal{H}$ is an isometric inclusion, the normalization is preserved, and the measurement statistics of the reduced system perfectly reconstruct those of the original problem.

    \end{enumerate}
\end{proof}

\section{Proof of Lemma 2}

\textbf{Lemma 2.} The operator $V$ satisfies the following identities:%
    \begin{equation}
       V^\dagger V = I_{\tilde{\mathcal{H}}_{\mathrm{eff}}}, \quad VV^\dagger = \Pi_{\mathrm{eff}}.
    \end{equation}

\begin{proof}
    By definition, $V = \sum_{k=0}^{M-1} |\phi_k\rangle\langle \tilde{\phi} _k|$, hence:%
    \begin{equation}
\begin{aligned}
V^\dagger V
&=
\left(\sum_{k=0}^{M-1} |\tilde{\phi} _k\rangle\langle \phi_k|\right)
\left(\sum_{j=0}^{M-1} |\phi_j\rangle\langle \tilde{\phi} _j|\right) \\
&=
\sum_{k,j=0}^{M-1} |\tilde{\phi} _k\rangle \langle \phi_k|\phi_j\rangle \langle \tilde{\phi} _j| \\
&=
\sum_{k=0}^{M-1} |\tilde{\phi} _k\rangle\langle \tilde{\phi} _k| \\
&=
I_{\tilde{\mathcal H}_{\mathrm{eff}}},
\end{aligned}
\end{equation}

\noindent where we used the orthonormality $\langle \phi_k|\phi_j\rangle=\delta_{kj}$. Similarly, %
\begin{equation}
\begin{aligned}
VV^\dagger
&=
\left(\sum_{k=0}^{M-1} |\phi_k\rangle\langle \tilde{\phi} _k|\right)
\left(\sum_{j=0}^{M-1} |\tilde{\phi} _j\rangle\langle \phi_j|\right) \\
&=
\sum_{k,j=0}^{M-1} |\phi_k\rangle \langle \tilde{\phi} _k|\tilde{\phi} _j\rangle \langle \phi_j| \\
&=
\sum_{k=0}^{M-1} |\phi_k\rangle\langle \phi_k| \\
&=
\Pi_{\mathrm{eff}},
\end{aligned}
\end{equation}

\noindent where we used $\langle \tilde{\phi} _k|\tilde{\phi} _j\rangle=\delta_{kj}$ and the fact that $\{|\phi_k\rangle\}_{k=0}^{M-1}$ is an orthonormal basis of $\mathcal H_{\mathrm{eff}}$.

\end{proof}

\section{Proof of Theorem 2}

\textbf{Theorem 2.} For any parameters $\gamma, \beta \in \mathbb{R}$ and the isometric mapping $V$, the following operator identities hold:%
    \begin{equation}
        e^{-i\gamma H_C} V = V e^{-i\gamma \tilde{H}_{C}^{\mathrm{eff}}}, \quad e^{-i\beta H_M} V = V e^{-i\beta \tilde{H}_{M}^{\mathrm{eff}}}.
    \end{equation}

    \noindent Consequently, for a QAOA circuit with $P$ layers, the total evolution operators $U(\boldsymbol{\gamma}, \boldsymbol{\beta})$ and $\tilde{U}(\boldsymbol{\gamma}, \boldsymbol{\beta})$ satisfy:%
    \begin{equation}
        U(\boldsymbol{\gamma}, \boldsymbol{\beta}) V = V \tilde{U}(\boldsymbol{\gamma}, \boldsymbol{\beta}).
    \end{equation}

\begin{proof}
    Consider the cost Hamiltonian $H_C$. Using the identity $V V^\dagger = \Pi_{\mathrm{eff}}$ from Lemma 2 and the fact that $H_C$ stays within the effective subspace, we see that:%
    \begin{equation}
        H_C V = (V V^\dagger) H_C V = V (V^\dagger H_C V) = V \tilde{H}C.
    \end{equation}

    \noindent This identity shows that the action of $H_C$ on the mapped states is exactly captured by $\tilde{H}_C$ in the small space. This also works for any power $k \ge 0$, so $H_C^k V = V \tilde{H}_C^k$. By using the series expansion for the exponential function, we get:%
    \begin{equation}
        e^{-i\gamma H_C} V = \sum_
{k=0}^\infty \frac{(-i\gamma)^k}{k!} H_C^k V = \sum_{k=0}^\infty \frac{(-i\gamma)^k}{k!} V \tilde{H}_C^k = V e^{-i\gamma \tilde{H}_C}.
    \end{equation}

    \noindent The same logic applies to $H_M$. Since this works for every single layer, it works for the entire QAOA circuit. 

\end{proof}

\section{Proof of Corollary 1}

\textbf{Corollary 1.} Under the assumptions of Theorem~1, for any reduced initial state $|\tilde\psi_0\rangle$ with $|\psi_0\rangle = V|\tilde\psi_0\rangle$, the dynamics generated in the reduced $m$-qubit space are equivalent to the original QAOA dynamics in $\mathcal H$. Consequently, the reduced system reproduces the same state trajectory, expectation values, and measurement statistics as the full system.

\begin{proof}
    \begin{enumerate}
        \item \textbf{State Matching:} By Theorem 1, the full-space trajectory is restricted to the effective subspace, $|\psi\rangle \in \mathcal{H}_{\text{eff}}$. Since $V V^\dagger = \Pi_{\text{eff}}$ acts as the identity on $\mathcal{H}_{\text{eff}}$, we have $|\psi\rangle = V V^\dagger |\psi\rangle$. From Theorem 2, the correspondence $V^\dagger U = \tilde{U} V^\dagger$ holds. Applying this to the initial state yields:%
        \begin{equation}
            |\psi\rangle = V \tilde{U} V^\dagger |\psi_0\rangle = V \tilde{U} |\tilde\psi_0\rangle = V |\tilde\psi\rangle.
        \end{equation}

        \item \textbf{Observable Matching:} From Theorem 1, the expectation value of any observable $O$ commuting with $\Pi_{\text{eff}}$ depends only on its restriction to the subspace. By substituting the state matching result $|\psi\rangle = V |\tilde\psi\rangle$ into the expectation value expression:%
        \begin{equation}
            \langle\psi|O|\psi\rangle = \langle\tilde\psi| V^\dagger O V |\tilde\psi\rangle.
        \end{equation}

        \noindent This confirms that the reduced operator $\tilde{O} = V^\dagger O V$ preserves the physical statistics.

        \item \textbf{Measurement Matching:} Since Theorem 1 ensures $\Pr_{\text{full}}(x) = \Pr_{\text{eff}}(x)$, and the isometry $V$ defines a unitary mapping between the orthonormal bases $\{|\phi_k\rangle\}$ of $\mathcal{H}_{\text{eff}}$ and $\{|\tilde{\phi} _k\rangle\}$ of $\tilde{\mathcal{H}}_{\text{eff}}$, the probability distributions over these corresponding indices must be identical. Formally, for $x$ such that $|x\rangle \in \mathcal{H}_{\text{eff}}$, we have:%
        \begin{equation}
            \langle x | \psi \rangle = \langle x | V |\tilde\psi\rangle = \sum_k \langle x | \phi_k \rangle \langle \tilde{\phi} _k | \tilde\psi \rangle.
        \end{equation}

        \noindent In the computational basis case where $|x\rangle = |\phi_k\rangle$, this reduces to $\langle x | \psi \rangle = \langle \tilde{\phi} _k | \tilde\psi \rangle$, yielding identical probabilities.
    \end{enumerate}
\end{proof}


%

\bibliographystyle{IEEEtran}
\bibliography{reference}

%




\end{document}